\title{Linear Loop Synthesis for Quadratic Invariants}
\author{S. Hitarth}{Hong Kong University of Science and Technology, Hong Kong}{hsinghab@connect.ust.hk}{0000-0001-7419-3560}{}%
\author{George Kenison}{Liverpool John Moores University, United Kingdom}{g.j.kenison@ljmu.ac.uk}{0000-0002-7661-7061}{}
\author{Laura Kov\'{a}cs}{TU Wien, Austria}{laura.kovacs@tuwien.ac.at}{0000-0002-8299-2714}{}
\author{Anton Varonka}{TU Wien, Austria}{anton.varonka@tuwien.ac.at}{0000-0001-5758-0657}{}
\authorrunning{S. Hitarth, G. Kenison, L. Kov\'{a}cs, and A. Varonka} %
\keywords{program synthesis, loop invariants, verification, Diophantine equations} %
\definecolor{red}{RGB}{116,0,1}
\definecolor{blue}{RGB}{14,26,164}
\NewDocumentCommand{\LeftComment}{s m}{%
  \Statex \IfBooleanF{#1}{\hspace*{\ALG@thistlm}}\textcolor{gray}{\(\triangleright\) #2}}
\algnewcommand\algorithmicswitch{\textbf{switch}}
\algnewcommand{\ConExp}[3]{$\;\left( #1 \right)\ ?\ #2 : #3 $}
\algnewcommand{\Break}{\textbf{break}}
\newaliascnt{prob}{theorem}
\newtheorem{problem}[prob]{Problem}
\newtheorem*{que}{Question}
\newcommand{\appref}[1]{\hyperref[#1]{Appendix~\ref*{#1}}}
\renewcommand{\top}{\ensuremath\mathsf{T}}
\renewcommand{\epsilon}{\ensuremath\varepsilon}
\renewcommand{\Lambda}{\ensuremath\mu}
\renewcommand{\R}{\mathbb{R}}
\newcommand{\Z}{\mathbb{Z}}
\newcommand{\Q}{\mathbb{Q}}
\newcommand{\ZZ}{\mathbb{Z}}
\newcommand{\QQ}{\mathbb{Q}}
\newcommand{\KK}{\mathbb{K}}
\newcommand{\diag}[1]{\ensuremath{\operatorname{diag}(#1)}}
\DeclareMathOperator{\GL}{GL}
\newcommand*\wc{{ }\cdot{ }}
\providecommand*{\eu}%
{\ensuremath{\mathrm{e}}}
\providecommand*{\iu}%
{\ensuremath{\mathrm{i}}}
\newcommand{\seq}[3][{}]{\langle #2 \rangle_{#3}^{#1}}
\newcommand{\loopie}{\mathcal{L}}
\begin{document}
\maketitle

\begin{abstract}
Invariants are key to formal loop verification as they capture loop properties that are
        valid before and after each loop iteration. Yet, 
	generating invariants is a notorious task already for
        syntactically restricted classes of loops. 
        Rather than generating invariants for given loops, in this paper we
        synthesise loops that exhibit a
        predefined behaviour given by an invariant.
	From the perspective of formal loop verification, 
	the synthesised loops are thus correct by design and no longer
        need to be verified. 
	
	To overcome the hardness of reasoning with arbitrarily strong
        invariants, in this paper we construct simple (non-nested)
        \textbf{while} loops with linear updates that exhibit polynomial equality invariants.
Rather than solving arbitrary
        polynomial equations, we consider loop properties defined by a single quadratic invariant in
        any number of variables. We present a procedure that, given a quadratic equation, decides whether a loop with affine updates satisfying this equation exists. Furthermore, if the answer is positive, the procedure synthesises a loop and ensures its variables achieve infinitely many different values.
	
\end{abstract}
\section{Introduction} \label{sec:introduction}
Linear loops, in their simplicity, constitute a convenient and yet expressive model. 
From an algebraic point of view,  a linear loop corresponds to a system of recurrence relations; 
solutions of such systems form a robust class in algorithmic
combinatorics and algebraic number theory~\cite{everest2003recurrence,DBLP:series/tmsc/KauersP11}. %
Linear loops are particularly common in control and digital signal processing software~\cite{jeannet2014accel}.
Note also that the problem of studying the functional behaviour of affine loops
(loops with update polynomials of degree~1)
can be reduced to that of studying linear loops~\cite{ouaknine2015survey}.
Moreover, linear loops can be used to overapproximate the behaviour of more expressive numerical programs, including those with unrestricted control flow and recursive procedures~\cite{kincaid2019closed}.

\subparagraph*{Loop Invariants.}
While variable updates of linear loops are restricted to linear
assignments, it is quite common that linear loops exhibit intricate polynomial
properties in the form of polynomial invariants. %
Non-linear polynomial invariant assertions might come in handy for the verification of safety properties;
by approximating the program's behaviour more accurately,
they admit fewer false positives. 
That is, a program verifier using polynomial loop invariants infers less frequently that a true assertion can
be violated~\cite{chatterjee2020inv}.

\subparagraph{Loop Synthesis.}
Generating invariants, in particular polynomial invariants, is a
notorious task,  shown to be undecidable for loops with arbitrary polynomial
arithmetic~\cite{Ouaknine23}. 
Rather than generating invariants for
loops, \emph{in this paper we work in the reverse direction: generating loops
from invariants}.
Thus we ensure that the constructed loops exhibit intended
invariant properties and are thus correct by design. 
Loop synthesis therefore provides
an alternative approach for proving program correctness.
If intermediate assertions of an involved program are written in terms of polynomial equalities,
automated loop synthesis can provide a code fragment satisfying that assertion,
while being correct by construction with respect to the specification.

To overcome hardness of polynomial reasoning and solving arbitrary
polynomial equations, we restrict our
attention to linear loops, and provide a decision procedure
for computing linear loops from (quadratic) polynomial invariants (\autoref{alg:lin}). 

Linear loop synthesis showcases how a simple model (a linear loop) 
can express complicated behaviours (quadratic invariants), as also
witnessed in sampling algorithms of 
real algebraic geometry~\cite{basu2006algorithm,dufresne2018sampling}.
A non-trivial linear loop for a polynomial invariant allows to
sample infinitely many points from the algebraic variety defined by the
polynomial.
Moreover, the computational cost to generate a new sample point only involves a matrix-vector multiplication.
We give further comment on why we do not accept trivial loops in the synthesis process in \cref{rem:SyntSolv}.

Thus
the result of a loop synthesis process for a polynomial equation (invariant) is an infinite family of solutions defined by recurrence relations.
This family is parameterised by~$n$, the number of loop iterations:
$n$th terms of the synthesised recurrence sequences yield a solution of the polynomial equation.
Whether the solution set of an equation admits a parameterisation of a certain kind
is, in general, an open problem~\cite{skolem1938dioph,vaserstein2010param}.

\subparagraph*{Our Contributions.}
The main contributions of this work are as follows:%
\begin{enumerate}%
	\item We present a procedure that, given a quadratic equation $P(x_1, \dots, x_d) = 0$
with an arbitrary number of variables and rational coefficients,
generates an affine loop such that $P = 0$ is invariant under its execution; i.e.,
the equality holds after any number of loop iterations.
If such a loop does not exist, the procedure returns a negative answer.

The values of the loop variables are rational.
Moreover, the state spaces of the loops synthesised by this procedure are infinite and, notably, the same valuation of loop variables is never reached twice.
The correctness of this procedure is established in \autoref{loop-for-qe}. 
	\item If the equation $Q(x_1, \dots, x_d) = c$ under consideration is such that $Q$ is a quadratic form,
we present a stronger result: a procedure (\autoref{alg:lin}) 
that generates a \emph{linear} loop with~$d$ variables satisfying the invariant equation.
\end{enumerate}

\subparagraph*{Paper Outline.}
\cref{sec:prelims} introduces relevant preliminary material.
We defer the discussion of \emph{polynomial equation solving}, a key element of loop synthesis, to~\cref{sec:eqsolv}.
Then, in \cref{sec:qf}, we provide a method to synthesise \emph{linear loops}
for invariants, where the invariants restricted to be equations with \emph{quadratic forms}.
We extend these  results in~\cref{sec:qe} and present 
a procedure that synthesises \emph{affine loops}, and hence also
linear loops, for invariants that
are 
\emph{arbitrary quadratic} equations. 
We discuss aspects of our approach and propose further
directions in~\cref{sec:conclusion}, in relation to known results.

In this extended version of our STACS 2024 paper, we include two appendices.
In Appendix~A , we summarise the procedure for finding isotropic solutions to quadratic forms (which we employ in our synthesis procedure).
The abstract arithmetic techniques contained therein are beyond the scope of this short paper and detail the contributions of many sources \cite{ireland1990classical, cremona2003conics,simon2005solving,simon2005quadratic,meyer1884mathematische,castel2013solving}.
In Appendix~B, we summarise the synthesis procedure underlying \autoref{loop-for-qe}.
\section{Preliminaries} \label{sec:prelims}
\subsection{Linear and quadratic forms}
\begin{definition}[Quadratic form]
	A \emph{$d$-ary quadratic form} over the field~$\KK$ is a homogeneous polynomial of degree~2 with~$d$ variables: $$Q(x_1, \dots, x_d) = \sum_{i\leq j} c_{ij} x_ix_j,$$
	where $c_{ij} \in \KK$. It is convenient to associate a quadratic form \(Q\) with the symmetric matrix: $$A_Q := \begin{pmatrix}
		c_{11} & \frac{1}{2}c_{12} & \dots & \frac{1}{2}c_{1d} \\
		\frac{1}{2}c_{12} & c_{22} & \dots & \frac{1}{2}c_{2d} \\
		\vdots  & \vdots  & \ddots & \vdots  \\
		\frac{1}{2}c_{1d} & \frac{1}{2}c_{2d} & \dots & c_{dd} \\
	\end{pmatrix}.$$%
\end{definition}
We  note that since $A_Q$ is symmetric, its eigenvalues are all real-valued. Further,  $Q(\bm{x}) = \bm{x}^\top  A_Q \bm{x}$ for a vector~$\bm{x} = (x_1, \dots, x_d)$ of variables.

We consider quadratic forms over the field~$\QQ$ of rational numbers by default. Therefore, a quadratic form has a rational quadratic matrix associated with it.
	
A quadratic form~$Q$ is \emph{non-degenerate} if its matrix~$A_Q$ is not singular; that is, $\det{A_Q} \neq 0$. 
A quadratic form $Q$ over $\QQ$ \emph{represents} the value~$a \in \QQ$ if there exists a vector~$\bm{x} \in \QQ^d$ such that $Q(\bm{x}) = a$. 
A quadratic form $Q$ over $\QQ$ is called \emph{isotropic} if it represents~$0$ non-trivially; i.e., there exists a non-zero vector~$\bm{x} \in \QQ^d$ with $Q(\bm{x}) = 0$. The vector itself is then called \emph{isotropic}. If no isotropic vector exists, the form is \emph{anisotropic}. 
A quadratic form~$Q$ is called \emph{positive} (resp.~\emph{negative}) \emph{definite} if $Q(\bm{x}) > 0$ (resp.~$Q(\bm{x}) < 0$) for all $\bm{x} \neq \bm{0}$.
Note that definite forms are necessarily anisotropic.
\begin{definition}\label{def-eq-qf}
	Let $Q_1$ and $Q_2$ be $d$-ary quadratic forms. 
	The forms $Q_1$ and $Q_2$ are \emph{equivalent}, denoted by $Q_1 \sim Q_2$, if there exists $\sigma \in \GL_d(\QQ)$ such that $Q_2 (\bm{x}) = Q_1 (\sigma\cdot\bm{x})$.
\end{definition}

From the preceding definition, there exists an (invertible) linear change of variables over $\QQ$ under which representations by~$Q_2$ are mapped to the representations by~$Q_1$.
It is clear that two equivalent quadratic forms represent the same values.
In terms of matrices, we have $(\sigma \bm{x})^\top A_{Q_1}\sigma \bm{x} = \bm{x}^\top A_{Q_2}\bm{x}$, and hence
\(
	A_{Q_2} = \sigma^\top A_{Q_1}\sigma\).

	\begin{definition}[Linear form]
		A \emph{linear form} in $d$ variables over the field~$\QQ$ is a homogeneous polynomial $L(x_1, \dots, x_d) = \sum_{i=1}^db_ix_i$ of degree~1, where $b_1, \dots, b_d \in \QQ$.
	\end{definition}
Note that each linear form admits a vector interpretation: $L(\bm{x}) = \bm{b}^\top \bm{x}$, where $\bm{b} = (
	b_1, \ldots, b_d)^\top \in \QQ^d$ is a non-zero vector of the linear form.

\subsection{Loops and Loop Synthesis}
Linear loops are a class of single-path loops whose update assignments are determined by a homogeneous system of linear equations in the program variables.

\begin{definition}[Linear loop]
	A \emph{linear loop} $\langle M, \bm{s} \rangle$ is a loop program of the form
	\begin{equation*}
		\bm{x} \leftarrow \bm{s}; \enspace \textbf{while}\ \star\ \textbf{do} \enspace \bm{x}\leftarrow M \bm{x},
	\end{equation*}
\end{definition} 
where \(\bm{x}\) is a \(d\)-dimensional column vector of program variables, 
\(\bm{s}\) is an \emph{initial} $d$-dimensional \emph{vector},
and \(M\) is a \(d\times d\) \emph{update matrix}. %
For the procedures, which we introduce here, to be effective, we assume that the entries of~$M$ and~$\bm{s}$ are rational.

We employ the notation~\(\star\), instead of using
\texttt{true} as loop guard, as our focus is on loop synthesis rather
than proving loop termination.

\begin{definition}[Affine loop]
		An \emph{affine loop}~$\langle M, \bm{s}, \bm{t}\rangle$ is a loop program of the form
		\begin{equation*}
			\bm{x} \leftarrow \bm{s};\ \textbf{while}\ \star\ \textbf{do}\ \bm{x}\leftarrow M \bm{x} + \bm{t},
		\end{equation*}
where, in addition to the previous definition, $\bm{t} \in \QQ^d$ is a translation vector.
\end{definition}

\begin{remark}[Linear and Affine Loops]\label{rem:LinAffLoop}
  A standard observation permits the simulation of affine loops by linear ones at a cost of one additional variable constantly set to 1.
An \emph{augmented matrix} of an affine loop with~$d$ variables is a matrix \(M'\in\Q^{(d+1)\times (d+1)}\) of the form
	\[M':=\left(\begin{array}{c|c} 1 & 0_{1,d} \\ \hline \bm{t}  & M	\end{array}\right).\]
	It follows that a linear loop $\langle M', (1, \bm{s})^\top
        \rangle$ simulates the affine loop in its last $d$ variables.
\end{remark}
	
A linear (or affine) loop with variables~$\bm{x} = (x_1, \dots, x_d)$ generates~$d$ sequences of numbers.
For each loop variable $x_j$, let $\seq[\infty]{x_j(n)}{n=0}\subseteq \Q$ denote the sequence whose
$n$th term is given by the value of $x_j$ after the $n$th loop iteration.
Similarly, define the sequence of vectors~$\seq{\bm{x}(n)}{n}\subseteq \QQ^d$.
For a given loop, we refer to its reachable set of states in \(\QQ^d\) as the loop's \emph{orbit}.
	A loop with variables $x_1,\ldots, x_d$ is \emph{non-trivial} if the orbit
	\begin{equation*}
		\mathcal{O}_{\bm{x}} := %
		\{ \left( x_1(n) , \ldots, x_d(n)\right) : n\geq 0 \} \subseteq \QQ^d
	\end{equation*}
	is infinite. A \emph{polynomial invariant} of a loop is a polynomial~$P \in \QQ[\bm{x}]$ such that \[P(x_1(n), \dots, x_d(n)) = 0\] holds for all~$n\geq 0$. 
	
\begin{problem}[Loop Synthesis]\label{prob:synthesis}
	Given a polynomial invariant~$P \in \QQ[x_1, \dots, x_d]$, 
	find a non-trivial linear (affine) loop 
	with vector sequence~$\seq{\bm{x}(n)}{n}$ such that
	\[P(x_1(n), \dots, x_d(n)) = 0\]
	holds for any $n \geq 0$.
\end{problem}

We emphasise that, unless stated otherwise, the objective of the
loop synthesis process from \cref{prob:synthesis} is to find a loop with the same number of variables~$d$ as in the input invariant.
That is, $\seq{\bm{x}(n)}{n} = \left(\seq{x_1(n)}{n}, \dots, \seq{x_d(n)}{n}\right)$

Note that~$P = 0$ in \cref{prob:synthesis} does not need to be an \emph{inductive invariant} for the synthesised loop:
We do not require the matrix~$M$ to preserve the equality for all vectors~$\bm{x}$.
There might still exist a vector $\bm{s'}$ such that $P(\bm{s'}) = 0$ but $P(M \cdot \bm{s'}) \neq 0$.
Observe that the search space only expands when we allow non-inductive invariants, thus making our loop synthesis procedures more general.

In summary, the search for an update matrix~$M$ 
(or the augmented matrix~$M'$ in the affine loop version of~\cref{prob:synthesis}),
is integrally linked to the search of~$\bm{s}$, a solution of the
polynomial~$P = 0$.

\begin{remark}[Loop Synthesis and Polynomial Equation
  Solving]\label{rem:SyntSolv}
 We note that \cref{prob:synthesis}, Loop Synthesis, relies on, but it is not equivalent
to, solving polynomial equations. 
Indeed, we focus on \emph{non-trivial} loops in
\cref{prob:synthesis}. 
Allowing loops with finite orbits would mean that a loop with an identity matrix update~$I_d$ is accepted as a solution:
\[\label{trivsol}
\bm{x}\leftarrow \bm{s}; \enspace \textbf{while}\ \star\ \textbf{do} \enspace \bm{x}\leftarrow I_d \cdot \bm{x}.
\]
Then, the loop synthesis problem would be equivalent to the problem of finding a rational solution of a polynomial equation~$P=0$ (see~\cref{prob:dioph}).
The problem, as we define it in~\cref{prob:synthesis}, neglects loops that satisfy a desired invariant but reach the same valuation of variables twice.
Due to this, the  \cref{prob:synthesis} of loop synthesis is 
different from the~\cref{prob:dioph} of solving polynomial equations.
\end{remark}
\section{Solving Quadratic Equations}\label{sec:eqsolv}
As showcased in \cref{prob:synthesis} and discussed in \cref{rem:SyntSolv}, loop synthesis for a polynomial invariant~$P=0$ is closely
related to the problem of solving a polynomial equation~$P=0$.
\begin{problem}[Solving Polynomial Equations]\label{prob:dioph}
	Given a polynomial $P \in \QQ[x_1, \dots, x_d]$, decide whether there exists a rational solution $(s_1, \dots, s_d) \in \QQ^d$ to the equation $P(x_1, \dots, x_d) = 0$.
\end{problem}
We emphasise that determining whether a given polynomial equation has a rational solution, is a fundamental open problem in number theory~\cite{shlapentokh2011h10}, see also~\cref{sec:related}.

Clearly, this poses challenges to our investigations of loops satisfying arbitrary polynomial invariants.
In light of this, it is natural to restrict \cref{prob:synthesis} to loop invariants given by \emph{quadratic} equations.
Given a single equation~$P(\bm{x})= 0$ of degree~2, the challenge from now on is to find a rational solution~$\bm{s}$ and an update matrix~$M$ such that iterative application of $M$ to~$\bm{s}$ of the equation does not violate the invariant: $P(M^n\bm{s}) = 0$ for all~$n \geq 0$.

In this section, we recall well-known methods for solving quadratic equations.
In the sequel, we will employ said methods in the novel setting of loop synthesis for quadratic polynomial
invariants (\cref{sec:qf,sec:qe}).

\begin{problem}[Solving Quadratic Equations]\label{prob:quadr}
	Given a quadratic equation in~$d$ variables with rational coefficients, decide whether it has rational solutions. If it does, generate one of the solutions. 
\end{problem}
\subsection{Solutions of Quadratic Equations in Two Variables}
We first prove two lemmas that discuss the solutions of binary quadratic forms in preparation for~\cref{sec:qf}.
\begin{lemma}\label{bin1sol}
	
	For all $a,b \in \QQ\setminus\{0\}$,
	Pell's equation \(x^2+\frac{b}{a}y^2=1\)
	has a rational solution~$(\alpha, \beta)$ with $\alpha \not\in \{\pm 1, \pm \frac{1}{2}, 0\}$ and $\beta \neq 0$.
\end{lemma}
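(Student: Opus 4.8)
The plan is to exhibit an explicit rational parametrisation of the conic $x^2 + \frac{b}{a}y^2 = 1$ and then argue that only finitely many parameter values are "bad", so a good one exists. First I would observe that the point $(\alpha,\beta) = (1,0)$ lies on the curve, so the standard secant construction applies: for a parameter $t \in \QQ$, the line through $(1,0)$ with slope $t$ meets the conic in a second rational point. Substituting $x = 1 + s$, $y = ts$ (or directly $y = t(x-1)$) into $x^2 + \frac{b}{a}y^2 = 1$ and dividing out the known root $s = 0$ yields the one-parameter family
\begin{equation*}
	\alpha(t) = \frac{1 - \frac{b}{a}t^2}{1 + \frac{b}{a}t^2}, \qquad \beta(t) = \frac{-2t}{1 + \frac{b}{a}t^2},
\end{equation*}
which is defined and rational for every $t \in \QQ$ such that $1 + \frac{b}{a}t^2 \neq 0$ (and this fails for at most two values of $t$, in fact for none when $b/a > 0$).

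Next I would verify the genericity conditions. We need $\beta(t) \neq 0$, which holds for all $t \neq 0$; and $\alpha(t) \notin \{\pm 1, \pm \tfrac12, 0\}$. Each of these is a single equation in $t$: $\alpha(t) = 1$ forces $t = 0$; $\alpha(t) = -1$ forces $1 = -1$ after clearing denominators, hence is impossible (so in fact never excluded); $\alpha(t) = 0$ gives $\frac{b}{a}t^2 = 1$, at most two values of $t$; $\alpha(t) = \tfrac12$ gives $\frac{b}{a}t^2 = \tfrac13$, at most two values; $\alpha(t) = -\tfrac12$ gives $\frac{b}{a}t^2 = 3$, at most two values. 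Altogether the forbidden set of parameters is finite (at most $9$ values, including the poles and $t=0$), so any $t \in \QQ$ outside this finite set — of which there are infinitely many — produces a rational solution $(\alpha(t),\beta(t))$ with the required properties.

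There is no serious obstacle here; the only point requiring a little care is the possibility that $1 + \frac{b}{a}t^2 = 0$ has rational roots, which happens precisely when $-b/a$ is a rational square, and even then it rules out only finitely many $t$. I would therefore conclude by simply picking, say, the parameter $t$ of smallest positive numerator/denominator avoiding the finitely many bad values, or leave the choice non-constructive since the statement only asserts existence. This completes the proof.
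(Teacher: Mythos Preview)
Your proof is correct and rests on the same geometric idea as the paper's: the conic $x^2+\frac{b}{a}y^2=1$ passes through $(1,0)$ and hence admits a rational parametrisation. The paper, however, specialises immediately to a single parameter value (effectively $t=1$ in your family), obtaining the point $\bigl(\frac{b-a}{a+b},\frac{2a}{a+b}\bigr)$ when $a\neq -b$, and then patches by hand the three ratios $b/a\in\{1,3,\tfrac13\}$ for which this point has $\alpha\in\{0,\pm\tfrac12\}$, together with the case $a=-b$. Your finiteness argument is cleaner and avoids any case analysis; on the other hand, the paper's explicit list of points is directly usable in the downstream synthesis procedure (\autoref{alg:lin}), where this lemma is invoked constructively. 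Your closing remark that one can enumerate $t$ past the finitely many bad values recovers constructivity if needed, so nothing is lost.
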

\begin{proof}
	So long as $a \neq -b$, it is easy to see that $\mleft(\frac{b-a}{a+b}, \frac{2a}{a+b}\mright)$ is a rational solution to Pell's equation.
	Recall that $a \neq 0$, hence $\beta \neq 0$ and $\alpha \neq \pm 1$.
	However, the generic solution might have $\alpha = 0$ or $|\alpha| = \frac{1}{2}$.
	We thus explicitly pick alternative solutions for the cases when it occurs:
	(i) $x^2 + y^2 = 1$ has another rational point, e.g., $(\frac{3}{5}, \frac{4}{5})$;
	(ii) $x^2+3y^2 = 1$ has a rational point $(-\frac{11}{13}, \frac{4}{13})$;
	(iii) $x^2+\frac{1}{3}y^2 = 1$ has a rational point $(\frac{1}{7},\frac{12}{7})$. 
	
	Finally, if $a = -b$, we can take a rational point $(\frac{5}{3},\frac{4}{3})$ on the hyperbola $x^2-y^2=1$.	
\end{proof}

\begin{lemma}\label{bin-solutions}
	An equation $ax^2+by^2=c$ with $a, b \in \QQ \setminus{0}$ has either no rational solutions different from \((0,0)\), or infinitely many rational solutions different from $(0,0)$. 
\end{lemma}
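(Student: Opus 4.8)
The plan is to show that any single nonzero rational solution of $ax^2+by^2=c$ can be parlayed into infinitely many, by exploiting the group structure coming from the associated Pell conic. The key observation is that $ax^2+by^2=c$ is the affine trace of the quadratic form $Q(x,y,z)=ax^2+by^2-cz^2$, and solutions on the conic carry an action of the rotation group of the Pell form $x^2+\tfrac{b}{a}y^2=1$, which by \autoref{bin1sol} has a rational point $(\alpha,\beta)$ with $\beta\neq 0$ and $\alpha\notin\{0,\pm 1,\pm\tfrac12\}$.

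First I would dispose of the degenerate reading: if the only rational solution is $(0,0)$ (which forces $c=0$ and the form $ax^2+by^2$ anisotropic), there is nothing to prove, so assume there is a rational point $(x_0,y_0)\neq(0,0)$ with $ax_0^2+by_0^2=c$. Note that $c\neq 0$ in this case: if $c=0$ then $ax_0^2=-by_0^2$ with $(x_0,y_0)\neq(0,0)$ forces both coordinates nonzero, and then $(tx_0,ty_0)$ is a solution for every $t\in\QQ$, giving infinitely many immediately. So from now on $c\neq 0$ and we may assume $x_0,y_0$ are not both zero.

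Next, the main construction. Given the Pell solution $(\alpha,\beta)$ from \autoref{bin1sol}, I claim the map
\[
  (x,y)\ \longmapsto\ \Bigl(\alpha x - \tfrac{b}{a}\beta y,\ \ \beta x + \alpha y\Bigr)
\]
sends rational solutions of $ax^2+by^2=c$ to rational solutions: indeed a direct computation gives
\[
  a\bigl(\alpha x - \tfrac{b}{a}\beta y\bigr)^2 + b\bigl(\beta x+\alpha y\bigr)^2
   = (a\alpha^2 + b\beta^2)x^2 + (\tfrac{b^2}{a}\beta^2 + b\alpha^2)y^2
   = (\alpha^2+\tfrac{b}{a}\beta^2)(ax^2+by^2) = ax^2+by^2,
\]
using $\alpha^2+\tfrac{b}{a}\beta^2=1$ (the cross terms cancel). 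Writing $\sigma$ for this linear map and applying it iteratively to $(x_0,y_0)$ produces a sequence of rational solutions $(x_n,y_n):=\sigma^n(x_0,y_0)$. It remains to argue this sequence takes infinitely many distinct values. The matrix of $\sigma$ is $\begin{pmatrix}\alpha & -\tfrac{b}{a}\beta\\ \beta & \alpha\end{pmatrix}$, with characteristic polynomial $\lambda^2 - 2\alpha\lambda + 1$ (determinant $\alpha^2+\tfrac{b}{a}\beta^2=1$). Its eigenvalues are $\alpha\pm\sqrt{\alpha^2-1}$, a pair of reciprocal algebraic numbers; since $\alpha\notin\{0,\pm 1\}$ these eigenvalues are neither $0$ nor roots of unity — if they were roots of unity they would be algebraic integers of absolute value $1$ all of whose conjugates have absolute value $1$, hence (Kronecker) roots of unity of degree $\le 2$, i.e.\ $\alpha\in\{0,\pm 1,\pm\tfrac12\}$, contradicting the choice of $(\alpha,\beta)$. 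Therefore $\sigma$ has infinite order in $\GL_2(\QQ)$.

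Finally I must convert ``$\sigma$ has infinite order'' into ``the orbit of $(x_0,y_0)$ is infinite''. Suppose for contradiction the orbit were finite; then $\sigma^{k}(x_0,y_0)=(x_0,y_0)$ for some $k\ge 1$, so $(x_0,y_0)$ lies in the kernel of $\sigma^k - I$. Since $\sigma$ has infinite order, $\sigma^k\neq I$, and as $\sigma^k$ is a nontrivial rotation of the nondegenerate form $x^2+\tfrac{b}{a}y^2$ it has no nonzero fixed vector (its eigenvalues are $(\alpha\pm\sqrt{\alpha^2-1})^k\neq 1$), forcing $(x_0,y_0)=(0,0)$, contrary to assumption. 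Hence $\{(x_n,y_n):n\ge 0\}$ is infinite, and since every $(x_n,y_n)$ solves $ax^2+by^2=c$ and $(0,0)$ does not (as $c\neq 0$), we obtain infinitely many rational solutions different from $(0,0)$, completing the proof.

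\emph{Expected main obstacle.} The computational identity is routine; the only genuinely delicate point is ruling out that $\sigma$ has finite order, i.e.\ that the eigenvalues $\alpha\pm\sqrt{\alpha^2-1}$ are roots of unity. This is exactly why \autoref{bin1sol} was stated with the explicit exclusions $\alpha\notin\{0,\pm 1,\pm\tfrac12\}$: those are precisely the values of $\alpha$ for which $\lambda^2-2\alpha\lambda+1$ is cyclotomic (giving orders $1,2,3,4,6$), so the hypothesis has been engineered to make the order-infinite argument go through cleanly.
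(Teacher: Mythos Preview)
Your proof is correct and follows essentially the same approach as the paper: both construct the same rotation matrix from the Pell point $(\alpha,\beta)$ of \autoref{bin1sol}, verify it preserves the form, and then rule out finite orbits by showing the eigenvalues $\alpha\pm\sqrt{\alpha^2-1}$ cannot be roots of unity given the exclusions on $\alpha$. The only cosmetic difference is that the paper invokes Niven's theorem (the rational values of $\cos\varphi$ at rational multiples of $\pi$ are $0,\pm\tfrac12,\pm1$), whereas you classify the degree-$\le 2$ cyclotomic polynomials directly; your passing reference to Kronecker is misplaced (Kronecker gives the converse implication), but your own ``Expected main obstacle'' paragraph makes clear you have the right argument, namely that $\lambda^2-2\alpha\lambda+1$ is cyclotomic only for $\alpha\in\{0,\pm\tfrac12,\pm1\}$.
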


\begin{proof}
	Define \( R:= \left(\begin{smallmatrix} 
				\alpha & -\frac{b}{a}\beta\\
		\beta & \alpha \end{smallmatrix}\right)\)
where $(\alpha,\beta) \in \QQ^2 \setminus \bm{0}$ satisfies \(\alpha^2 + \frac{b}{a}\beta^2=1\) (is a solution to Pell's equation) for which \(\alpha\notin \{\pm 1, \pm \frac{1}{2}, 0\}\) (as in \autoref{bin1sol}).
	What follows can be viewed as an application of the multiplication principle for the generalised Pell's equation~\cite{andreescu2016quadratic}. 
	Observe that if $\bm{v} = (x , y)^\top$ is a solution to $ax^2+by^2=c$, then so is $R\bm{v}$.

We now show how to generate infinitely many rational solutions to $ax^2+by^2=c$ from a single rational solution.
	Assume, towards a contradiction, that $R^{n+k}\bm{v} = R^n \bm{v}$ holds for some $n\geq 0$, $k \geq 1$.
	Therefore, there exists an integer $k$ such that $1$ is an eigenvalue of $R^k$. Equivalently, there exists a root of unity $\omega$ which is an eigenvalue of~$R$.
	We proceed under this assumption.
	
	By construction, the eigenvalues of $R$ are $\omega$ and $\omega^{-1}$. 
		Let $\varphi$ be the argument of $\omega$. 
	  Then the real part of $\omega$,  $\cos(\varphi)$, is equal to $\alpha$ (and thus rational). 
	Since~$\omega$ is a root of unity, $\varphi$ is a rational multiple of~$2\pi$.
	By Niven's theorem~\cite[Corollary 3.12]{niven1956irrat}, the only rational values for $\cos(\varphi)$ are $0$, $\pm \frac{1}{2}$ and $\pm 1$.
	We arrive at a contradiction, as $\alpha$ was carefully picked to avoid these values.
	
In summary, we have shown that $R$ has no eigenvalues that are roots of unity, from which we deduce the desired result.
\end{proof}

\subsection{Solving Isotropic Quadratic Forms} 
We next present an approach to solving \cref{prob:quadr} that uses the theory of representations of quadratic forms. 
First, we prove a lemma concerning the representations of~$0$.

\begin{lemma}\label{rep-not-zero}
	Let $Q(x_1, \dots, x_n) = a_1x_1^2 + \dots + a_nx_n^2$ be an isotropic quadratic form with $a_1, \dots, a_n \neq 0$.
	There exists a representation~$(\alpha_1, \dots, \alpha_n)$ of~$0$; i.e., \(a_1\alpha_1^2 + \dots + a_n\alpha_n^2 = 0\)
	such that $\alpha_1, \dots, \alpha_n \neq 0$.
\end{lemma}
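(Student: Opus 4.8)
The plan is to argue by induction on the number of variables $n$ appearing in the isotropic form, strengthening the statement slightly so the induction goes through. For $n = 2$, an isotropic form $a_1 x_1^2 + a_2 x_2^2$ with $a_1, a_2 \neq 0$ has a nonzero isotropic vector $(\alpha_1, \alpha_2)$; since $a_1 \neq 0$, if $\alpha_1 = 0$ then $a_2 \alpha_2^2 = 0$ forces $\alpha_2 = 0$, a contradiction, and symmetrically $\alpha_2 \neq 0$. So the base case is immediate and in fact requires no work beyond unpacking definitions. (Note $n = 1$ cannot occur, since $a_1 x_1^2 = 0$ with $a_1 \neq 0$ has only the trivial zero.)

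For the inductive step, suppose $Q = a_1 x_1^2 + \dots + a_n x_n^2$ is isotropic with all $a_i \neq 0$ and $n \geq 3$. Take any nonzero isotropic vector $\bm{\alpha} = (\alpha_1, \dots, \alpha_n)$. If every coordinate is already nonzero, we are done. Otherwise let $S = \{ i : \alpha_i = 0 \}$, a nonempty proper subset (proper because $\bm{\alpha} \neq \bm{0}$). The restriction of $Q$ to the complementary variables $\{x_i : i \notin S\}$ is then an isotropic form in $n - |S| \geq 1$ variables. The key point is that this smaller form cannot be anisotropic-of-size-one: if $|S| = n - 1$ then a single term $a_j \alpha_j^2 = 0$ would force $\alpha_j = 0$, contradicting $\bm{\alpha} \neq \bm{0}$; hence the restriction has at least two variables. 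By induction it has a representation of $0$ with all of its (at least two) coordinates nonzero — but that representation still has the coordinates indexed by $S$ equal to zero, so this alone does not finish the argument.

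The actual mechanism to fill in the zero coordinates is a perturbation/combination trick rather than a naive induction, and I expect this to be the main obstacle. The clean way: pick an isotropic vector $\bm{u}$ supported on a set $T$ of size $\geq 2$ (exists as above), and separately pick, for each "missing" index $j \in S$, a two-variable isotropic-or-not move that introduces a nonzero $j$-th coordinate while staying on the quadric. Concretely, since $a_i a_j \neq 0$ for $i \in T$ and $j \notin T$, the binary form $a_i y^2 + a_j z^2$ represents $Q(\bm{u})/(\text{something})$ or, better, one uses the classical fact that on an isotropic quadric through a rational point the rational points are Zariski-dense: parametrise the conic-section lines through $\bm{u}$. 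A slicker route avoids coordinates entirely: from one isotropic vector $\bm{\alpha}$ and any vector $\bm{\beta}$, the family $\bm{\alpha} + t\,\bm{\beta}'$ (with $\bm{\beta}'$ chosen in the tangent hyperplane, i.e. $\bm{\beta}'^\top A_Q \bm{\alpha} = 0$, $\bm\beta'$ not parallel to $\bm\alpha$) remains isotropic for all $t$, because $Q(\bm{\alpha} + t\bm{\beta}') = Q(\bm\alpha) + 2t\,\bm{\beta}'^\top A_Q\bm\alpha + t^2 Q(\bm{\beta}') = t^2 Q(\bm{\beta}')$, and if moreover $\bm\beta'$ is itself isotropic then the whole line lies on the quadric. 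Choosing finitely many such isotropic tangent directions and a single generic value of $t \in \QQ$ (avoiding the finitely many bad values that would zero out some coordinate) yields a new isotropic vector with strictly more nonzero coordinates; iterating at most $n$ times produces one with all coordinates nonzero. The delicate part is ensuring such tangent isotropic directions $\bm\beta'$ exist and genuinely cover every currently-zero coordinate — this is where one must use that each $a_i \neq 0$ and that $n \geq 3$, guaranteeing enough room in the tangent hyperplane (which has dimension $n - 1 \geq 2$) to find an isotropic vector not proportional to $\bm\alpha$; Lemma~\ref{bin1sol}-style explicit binary solutions can be invoked to build these directions concretely.
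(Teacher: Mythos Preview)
Your tangent-line approach has a genuine gap: isotropic tangent directions $\bm{\beta}'$ with a nonzero entry at a prescribed index need not exist. Take $n=3$, $Q = x_1^2 - x_2^2 + x_3^2$, and the isotropic vector $\bm{\alpha} = (1,1,0)$. The tangent condition $\bm{\beta}'^\top A_Q \bm{\alpha} = 0$ reads $\beta'_1 - \beta'_2 = 0$; combined with isotropy $(\beta'_1)^2 - (\beta'_2)^2 + (\beta'_3)^2 = 0$ this forces $\beta'_3 = 0$. So every isotropic tangent direction is a multiple of $\bm{\alpha}$ itself, and you can never make the third coordinate nonzero by moving along such a line. (Geometrically: for a smooth conic in $\mathbb{P}^2$ the tangent at a point meets the conic only at that point, so for $n=3$ there simply are no other isotropic vectors in the tangent hyperplane.) Your closing appeal to ``enough room in the tangent hyperplane'' and to Lemma~\ref{bin1sol} ``to build these directions concretely'' does not survive this example, and is not developed into an argument.

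The paper's proof uses precisely the idea you sketched and then abandoned in favour of the tangent-line picture: act on a \emph{pair} of coordinates $(x_r,x_{r+1})$, one currently nonzero and one currently zero, by a $2\times 2$ matrix $R$ preserving the binary form $a_r x_r^2 + a_{r+1} x_{r+1}^2$. Lemma~\ref{bin1sol} supplies a rational point $(\alpha,\beta)$ on $x^2 + \tfrac{a_{r+1}}{a_r} y^2 = 1$ with $\alpha,\beta \neq 0$, and then $R\,(\beta_r,0)^\top = (\alpha\beta_r,\beta\beta_r)$ has both entries nonzero while $a_r(\alpha\beta_r)^2 + a_{r+1}(\beta\beta_r)^2 = a_r\beta_r^2$ is unchanged, so the full vector remains isotropic with strictly fewer zero coordinates. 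Iterating finishes. No tangency or global geometry is needed---just this explicit two-variable ``rotation''.
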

\begin{proof}
	Let $(\beta_1, \dots, \beta_n) \in \QQ^n$ be a representation of~$0$ by~$Q$.
	We further assume that $\beta_1, \dots, \beta_r \neq 0$ while $\beta_{r+1} = \dots = \beta_n = 0$, and $r < n$.
	Moreover, let~$\lambda:= a_r\beta_r^2 + a_{r+1}\beta_{r+1}^2$.
	
	Consider the equation $x^2 + \frac{a_{r+1}}{a_r}y^2 = 1$.
	From~\cref{bin1sol}, it has a rational solution~$(\alpha, \beta)$ such that $\alpha, \beta \neq 0$. 
	This implies $a_r\alpha^2 + a_{r+1}\beta^2 = a_r$.
	The pair $(\beta_r, 0)$ is one solution to $a_rx_r^2 + a_{r+1}x_{r+1}^2 = \lambda$.
	Following the steps in the proof of~\cref{bin-solutions}, we can construct a matrix $R$ for which
	$R \cdot (\beta_r , 0 )^\top = (\alpha \beta_r , \beta \beta_r )^\top$
	where $(\alpha\beta_r, \beta\beta_r)$ is a solution of $a_rx_r^2 + a_{r+1}x_{r+1}^2 = \lambda$ with both components being non-zero.
	Therefore, $(\beta_1, \dots, \beta_{r-1}, \alpha\beta_r, \beta\beta_r, \beta_{r+2}, \dots, \beta_n)$ is an isotropic vector of~$Q$ with fewer zero entries.
	By repeating the process, we obtain an isotropic vector~$(\alpha_1, \dots, \alpha_n)$ as desired.
\end{proof}
	We emphasise that the process of eliminating zeros from the isotropic vector is effective. A similar proof is given in~\cite[p.294, Theorem~8]{borevich1966nt}.

In this discussion, we focus on solving equations of the form~$Q(x_1, \dots, x_d) = c$, where $Q$ is a quadratic form.
As it will be shown later in \cref{sec:qf}, it is always possible to find an equivalent diagonal quadratic form~$D \sim Q$.
Therefore, we restrict our attention to equations of the form $a_1x_1^2+ \dots + a_dx_d^2 = c$.
Assuming~$c \neq 0$, we start by homogenising the equation, and so consider the solutions of
\begin{equation}\label{eq:iso} a_1x_1^2+ \dots + a_dx_d^2 - cx_{d+1}^2 = 0.\end{equation}
In other words, we are searching for a rational isotropic vector of a quadratic form.
\begin{proposition}\label{prop:Isotropic}
	An equation
	\begin{equation}\label{eq:pure-quadr}
		a_1x_1^2+ \dots + a_dx_d^2 = c
	\end{equation}
	has a rational solution different from $(0, \dots, 0)$
	if and only if 
	the quadratic form~$Q = a_1x_1^2 + \dots + a_dx_d^2 - cx_{d+1}^2$
	has an isotropic vector.
\end{proposition}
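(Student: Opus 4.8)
The plan is to prove both implications by an explicit, effective correspondence between solutions. The key observation is that the homogenisation variable $x_{d+1}$ can be normalised to $1$ whenever it is non-zero, and that the degenerate case $x_{d+1}=0$ is exactly where \autoref{rep-not-zero} (or rather the isotropy of a subform) comes into play.

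\textbf{Forward direction.} Suppose \eqref{eq:pure-quadr} has a solution $(s_1,\dots,s_d)\in\QQ^d\setminus\{\bm 0\}$, so $a_1 s_1^2+\dots+a_d s_d^2 = c$. Then $(s_1,\dots,s_d,1)\in\QQ^{d+1}$ is a non-zero vector satisfying $a_1 s_1^2+\dots+a_d s_d^2 - c\cdot 1^2 = 0$, hence $Q$ is isotropic. This direction is immediate and requires no case analysis.

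\textbf{Converse direction.} Suppose $Q = a_1x_1^2+\dots+a_dx_d^2 - cx_{d+1}^2$ has an isotropic vector $(\alpha_1,\dots,\alpha_d,\alpha_{d+1})\neq\bm 0$. First I would split on whether $\alpha_{d+1}=0$. If $\alpha_{d+1}\neq 0$, then dividing through by $\alpha_{d+1}^2$ gives $a_1(\alpha_1/\alpha_{d+1})^2 + \dots + a_d(\alpha_1/\alpha_{d+1})^2 = c$, so $(\alpha_1/\alpha_{d+1},\dots,\alpha_d/\alpha_{d+1})$ solves \eqref{eq:pure-quadr}, and it is non-zero because the original vector was non-zero in some coordinate $\le d$ (if all $\alpha_i=0$ for $i\le d$ then $\alpha_{d+1}=0$ too since $c\neq 0$, contradiction). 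If $\alpha_{d+1}=0$, then $(\alpha_1,\dots,\alpha_d)$ is a non-zero isotropic vector of the subform $a_1x_1^2+\dots+a_dx_d^2$, i.e.\ $a_1\alpha_1^2+\dots+a_d\alpha_d^2=0$ with not all $\alpha_i$ zero. The task is then to convert an isotropic solution of the subform into a genuine representation of the non-zero value $c$; this is the place where I expect the main work.

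\textbf{The main obstacle} is the $\alpha_{d+1}=0$ case. Here I would argue as follows: since the subform $a_1x_1^2+\dots+a_dx_d^2$ is isotropic (ignoring any zero coefficients, but all $a_i\neq 0$ by hypothesis), \autoref{rep-not-zero} lets us assume the isotropic vector $(\alpha_1,\dots,\alpha_d)$ has \emph{all} entries non-zero. Then I can perturb along the isotropic direction: for a parameter $t\in\QQ$, consider a vector of the form $\bm\alpha + t\bm e$ where $\bm e$ is chosen so that $Q_{\mathrm{sub}}(\bm\alpha + t\bm e)$ becomes a non-constant linear function of $t$ (using that $\bm\alpha$ is isotropic, the quadratic-in-$t$ term's constant part vanishes and we get $2t\,\bm\alpha^\top A\,\bm e + t^2 Q_{\mathrm{sub}}(\bm e)$); picking $\bm e$ with $\bm\alpha^\top A\bm e\neq 0$ — possible since $A$ is non-singular on the relevant coordinates and $\bm\alpha\neq\bm0$ — and then solving the resulting (at most quadratic) equation in $t$ for the value $c$ yields a rational solution, provided that equation has a rational root. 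A cleaner alternative avoiding root-existence worries: use the standard fact that an isotropic quadratic form over $\QQ$ is universal, i.e.\ represents every element of $\QQ$, in particular $c$; one can cite this or reprove it via the hyperbolic plane decomposition ($a_1x_1^2+a_ix_i^2$ restricted to the span of the isotropic vector and a complement is equivalent to $xy$, which represents everything). I would present the universality argument as the crisp route, falling back on \autoref{rep-not-zero} and the multiplication-principle machinery of \autoref{bin-solutions} already developed in the paper to keep everything effective and self-contained.
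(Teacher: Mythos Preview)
Your proposal is correct, but you work harder than necessary in the converse direction. The paper avoids your case split on $\alpha_{d+1}$ altogether: after disposing of $c=0$ as a tautology, all $d{+}1$ coefficients of $Q$ are non-zero, so \autoref{rep-not-zero} applies \emph{directly to the full $(d{+}1)$-ary form} $Q$ and yields an isotropic vector with every entry non-zero---in particular $\alpha_{d+1}\neq 0$---after which one simply divides by $\alpha_{d+1}$. Your route through the $\alpha_{d+1}=0$ sub-case amounts to re-deriving the universality of isotropic forms; this is valid, and your perturbation sketch is essentially the standard argument (though, as you yourself flag, the resulting quadratic in $t$ need not have a rational root without a sharper choice of $\bm e$, so citing universality is the safer option). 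The paper's approach trades this whole detour for a single invocation of \autoref{rep-not-zero} one dimension up, which keeps the proof to a few lines while remaining just as effective.
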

\begin{proof}
	For $c=0$, the statement is a recitation of a definition.
	We continue under the assumption~$c \neq 0$.
	Recall from~\cref{rep-not-zero} that if the form~$Q$ is isotropic, then there is an isotropic vector~$(\alpha_1, \dots, \alpha_{d+1})$ with $\alpha_i \neq 0$ for all $i \in \{1, \dots, d+1\}$.
	Therefore, we can find a non-zero solution $(\alpha_1/\alpha_{d+1}, \dots, \alpha_d/\alpha_{d+1})$ to~\cref{eq:pure-quadr}.
	Conversely, if \eqref{eq:pure-quadr} has a non-trivial solution $(\beta_1, \dots, \beta_d)$, it follows that $(\beta_1, \dots, \beta_d, 1)$ is an isotropic vector for~$Q$.
\end{proof}

\subsection{Finding Isotropic Vectors}
\cref{prop:Isotropic} implies that
solving \cref{prob:dioph}, and hence also loop synthesis in
\cref{prob:synthesis}, 
requires detecting whether a certain quadratic form is
isotropic. 
Effective isotropy tests are known for quadratic forms~$Q(x_1, \dots, x_{d+1})$ as in~\cref{eq:iso}.
A more difficult task is the problem of finding an isotropic vector for such a form. 

The abstract arithmetic techniques employed in finding an isotropic vector are beyond the scope of this paper; however, we give a brief overview of the computational task and a number of references to the literature in \cref{app:appendix-isotropy}.
Our takeaways from the theory
are the following functions:
	\begin{itemize}
		\item \textproc{isIsotropic}: a function that, given an indefinite quadratic form over the rationals as an input, determines whether the input is isotropic and duly returns the answers \textsc{yes} and \textsc{no} (as appropriate).
		\item \textproc{findIsotropic}: a function that accepts isotropic quadratic forms over the rationals as inputs and returns an isotropic vector for each such form.
		\item \textproc{solve}: 
	a function that takes \cref{eq:pure-quadr} as an input and returns a non-zero solution if the form~$a_1x_1^2+\dots+a_dx_d^2-cx_{d+1}^2$ is isotropic; otherwise \textproc{solve} returns ``\textsc{no solutions}''.
		The function \textproc{solve} calls both \textproc{isIsotropic} and \textproc{findIsotropic}, as shown in \autoref{alg:isotropic}.
	\end{itemize}
	
We note the \textproc{solve} subroutine in the sequel: the function \textproc{linLoop} defined in \cref{alg:lin}, calls on \textproc{solve}; and, in turn, the function \textproc{linLoop} is called %
by the procedure in~\cref{sec:qe}.

\section{Quadratic Forms: Linear Loops}\label{sec:qf}
The core of this section addresses equations, and hence loop invariants, that involve quadratic forms.
The equations (invariants) of this section do not have a linear part; they are quadratic forms equated to constants; that is, equations of the form
\begin{equation}\label{qf-c}
	Q(x_1, \dots, x_d) = c,
\end{equation}
where $Q$ is an arbitrary $d$-ary quadratic form with rational coefficients, $c$ is a rational number.

The main result of this section is the following theorem, which establishes a decision procedure that can determine if a given quadratic invariant admits a linear loop and, if so, constructs that loop.
\begin{theorem}[Linear Loops for Quadratic Forms]\label{loop-for-qf}
	There exists a procedure that, given an equation $Q(x_1, \dots, x_d) = c$ of the form~\eqref{qf-c},
	decides whether a non-trivial linear loop satisfying $Q(x_1, \dots, x_d) = c$ exists and, if so,
	synthesises a loop.
\end{theorem}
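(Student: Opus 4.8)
The plan is to reduce the general equation $Q(x_1,\dots,x_d)=c$ to a diagonal form and then perform a case analysis on the signature and definiteness of $Q$, building the loop update as a rotation-like matrix that preserves each quadratic "shell". First I would diagonalise: since $A_Q$ is a rational symmetric matrix, there is an effectively computable $\sigma \in \GL_d(\QQ)$ with $\sigma^\top A_Q \sigma$ diagonal, say with entries $a_1,\dots,a_d$ (some possibly zero if $Q$ is degenerate). Any linear loop $\langle N, \bm{s}\rangle$ satisfying the diagonalised equation $D(\bm y)=c$ transports, via $\bm x = \sigma \bm y$, to a linear loop $\langle \sigma N \sigma^{-1}, \sigma\bm s\rangle$ satisfying $Q(\bm x)=c$, and non-triviality is preserved because $\sigma$ is a bijection. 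So it suffices to handle $D(\bm y) = a_1 y_1^2 + \cdots + a_d y_d^2 = c$.

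Next I would dispose of the degenerate directions and the case $c = 0$. If some $a_i = 0$, the variable $y_i$ does not appear in the equation, so I can let it evolve freely — e.g. double it each step via a diagonal block $(2)$ — which immediately gives an infinite orbit while keeping the invariant; so WLOG all $a_i \neq 0$ once a free variable exists. If there are no free variables and $c=0$ we need an isotropic vector of $D$: call \textproc{solve} (equivalently \textproc{findIsotropic}); if $D$ is anisotropic, the only solution is $\bm 0$, forcing the trivial loop, and the procedure returns \textsc{no}; if $D$ is isotropic, \cref{rep-not-zero} yields an isotropic vector with all coordinates nonzero, and I would then scale pairs of coordinates to build a diagonal loop whose blocks are $2\times 2$ "hyperbolic rotations" (as in \autoref{bin-solutions}) on pairs $(y_i,y_j)$ with $a_i\alpha_i^2 = -a_j\alpha_j^2$, which fix the partial sum $a_iy_i^2 + a_jy_j^2$ and hence fix $D$, while \autoref{bin-solutions}'s eigenvalue argument (no roots of unity) guarantees an infinite orbit.

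For the main case $c \neq 0$ and all $a_i \neq 0$, I would first use \textproc{solve} on the homogenisation $a_1y_1^2 + \cdots + a_dy_d^2 - c y_{d+1}^2 = 0$ (\cref{prop:Isotropic}) to decide whether any rational solution $\bm s$ exists at all; if not, return \textsc{no}. Given a solution $\bm s$, the task is to move it by a rational matrix that preserves $D$. If $d \geq 2$ and not all $a_i$ have the same sign, pick indices $i,j$ with $a_i, a_j$ of opposite sign and apply \autoref{bin-solutions}: the form $a_i y_i^2 + a_j y_j^2 = c'$ (with $c' = a_i s_i^2 + a_j s_j^2$) has infinitely many rational points through a matrix $R$ with no root-of-unity eigenvalue, so embedding $R$ as a $2\times 2$ block and the identity elsewhere gives a non-trivial linear loop. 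If instead all $a_i$ share a sign, $D$ is definite and $D^{-1}(c)$ is a bounded subset of $\QQ^d$; a rational linear map preserving $D$ has all eigenvalues of modulus $1$, and I would argue (again via Niven's theorem / the Kronecker-type argument in \autoref{bin-solutions}) that any rational orthogonal-type transformation of a rational definite form has finite order, so no non-trivial loop exists and the procedure returns \textsc{no} — unless $d=1$, where $D^{-1}(c)$ has at most one point up to sign and the answer is likewise \textsc{no}.

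The main obstacle I anticipate is the definite case: proving rigorously that a \emph{rational} automorphism of an anisotropic (in particular definite) rational quadratic form cannot have infinite orbit on a level set. The boundedness of $D^{-1}(c)$ plus rationality of iterates pins the orbit inside a set with only finitely many points of bounded height, but one must turn this into a clean statement — essentially that the rational orthogonal group $O(D)(\QQ)$ is a torsion group acting with finite orbits, which in the $2\times 2$ reduction is exactly the Niven/root-of-unity dichotomy already used, but needs care to extend to all $d$ and to be made into an effective "return \textsc{no}" decision. The remaining bookkeeping — verifying the block matrices are rational, invertible, and preserve $D$, and that the constructed $\bm s$ indeed lies on the right level set after the coordinate change — is routine.
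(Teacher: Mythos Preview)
Your approach matches the paper's in its diagonalisation step and in the use of \autoref{bin-solutions} to build a $2\times 2$ block $R$ with infinite-order orbit. However, there is a genuine error in your treatment of the definite case.

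You claim that when all $a_i$ share a sign (so $D$ is definite), no non-trivial linear loop exists, arguing that ``any rational orthogonal-type transformation of a rational definite form has finite order'' via Niven's theorem. This is precisely backwards. Niven's theorem, as deployed in \autoref{bin-solutions}, shows that the matrix $R$ constructed from a Pell solution $(\alpha,\beta)$ with $\alpha\notin\{0,\pm\tfrac12,\pm1\}$ has \emph{no} root-of-unity eigenvalues and therefore has \emph{infinite} order. Concretely, for $x^2+y^2=1$ the matrix
\[
R=\begin{pmatrix}3/5 & -4/5\\ 4/5 & 3/5\end{pmatrix}
\]
is a rational rotation of infinite order, and $\langle R,(1,0)^\top\rangle$ is a non-trivial linear loop for the positive-definite invariant $x^2+y^2=1$. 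The rational orthogonal group $O(D)(\QQ)$ is \emph{not} a torsion group; your ``bounded height'' heuristic fails because iterates of $R$ on $(1,0)$ have unbounded denominators while remaining on the bounded level set. Your procedure would wrongly output \textsc{no} on such inputs, contradicting \cref{cor:loop-exists}.

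The paper's proof makes no definite/indefinite split at all: given any two non-zero diagonal entries $a_1,a_2$ (regardless of sign), \autoref{bin1sol} furnishes a suitable Pell solution and hence a block $R$ preserving $a_1y_1^2+a_2y_2^2$. The sign condition you impose is unnecessary, and dropping it --- together with the incorrect definite-case branch --- repairs your argument and brings it in line with the paper's. Your separate handling of $c=0$ via pairing isotropic coordinates is also more elaborate than needed; once an isotropic vector exists, the paper simply scales by $\lambda\neq 0,\pm1$ (homogeneity), though this is a matter of taste rather than correctness.
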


We prove~\cref{loop-for-qf} in several steps. The first of them is to diagonalise the quadratic form~$Q$ and thus reduce to~\cref{qf-c} without mixed terms on the left-hand side.

\subsection{Rational Diagonalisation} %
\label{sec:rat-diag}
A rational quadratic form can be diagonalised %
by an invertible change of variables with only rational coefficients. 
\begin{proposition}\label{rat-diag}
	Let~$Q$ be a (possibly degenerate) $d$-ary quadratic form. There exists an equivalent quadratic form~$D$ with a diagonal matrix~$A_D \in \QQ^{d\times d}$, i.e., $Q \sim D$. Furthermore, \(
		A_D = \sigma^\top A_Q\sigma\) holds with $\sigma \in \GL_d(\QQ)$.
\end{proposition}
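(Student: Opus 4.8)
The plan is to argue by induction on the number of variables $d$, using a rational version of Lagrange's ``completing the square'' reduction. The base case $d=1$ is immediate, since every unary form is already diagonal and one may take $\sigma=(1)$. For the inductive step, first dispose of the trivial case $A_Q=0$ (take $\sigma=I_d$). Otherwise $Q$ is a nonzero polynomial, and the key preliminary observation is that, over $\QQ$ (a field of characteristic $\neq 2$), some vector $\bm{v}\in\QQ^d$ satisfies $Q(\bm{v})\neq 0$: if some diagonal coefficient $c_{ii}\neq 0$ then $\bm{v}=\bm{e}_i$ works; if all $c_{ii}=0$ but $c_{ij}\neq 0$ for some $i<j$, then $Q(\bm{e}_i+\bm{e}_j)=c_{ij}\neq 0$. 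Completing $\bm{v}$ to a basis of $\QQ^d$ gives an invertible rational change of variables after which, upon renaming, we may assume $a_1:=Q(\bm{e}_1)\neq 0$.

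Next, write $Q(x_1,\dots,x_d)=a_1x_1^2+2x_1L(x_2,\dots,x_d)+Q'(x_2,\dots,x_d)$, where $L$ is a rational linear form and $Q'$ a rational $(d-1)$-ary form, and complete the square: $Q=a_1\bigl(x_1+a_1^{-1}L\bigr)^2+\bigl(Q'-a_1^{-1}L^2\bigr)$. The substitution $y_1=x_1+a_1^{-1}L(x_2,\dots,x_d)$ and $y_i=x_i$ for $i\geq 2$ is an invertible linear map with rational coefficients --- crucially, no square roots are needed, as we only divide by the rational scalar $a_1$ --- and in the new coordinates $Q=a_1y_1^2+\widetilde{Q}(y_2,\dots,y_d)$ with $\widetilde{Q}:=Q'-a_1^{-1}L^2$ a rational $(d-1)$-ary form. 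Applying the induction hypothesis to $\widetilde{Q}$ yields a rational invertible change of variables on $y_2,\dots,y_d$ diagonalising it; extending it by the identity on $y_1$ and composing all the changes of variables produces a single $\sigma\in\GL_d(\QQ)$ with $\sigma^\top A_Q\sigma$ diagonal, since a product of rational invertible matrices is again rational and invertible. Setting $A_D:=\sigma^\top A_Q\sigma$ and invoking the transformation rule for matrices of equivalent quadratic forms recorded after Definition~\ref{def-eq-qf} gives exactly $Q\sim D$ with $A_D\in\QQ^{d\times d}$ diagonal.

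The only genuinely delicate point is the preliminary step ensuring a nonzero diagonal coefficient: this is precisely where characteristic $\neq 2$ enters, and it is the reason the argument delivers a \emph{rational} (rather than merely real-orthogonal) transformation, which is what the subsequent synthesis sections require. Everything else is routine bookkeeping: tracking that the successive substitutions compose to an element of $\GL_d(\QQ)$, and noting that each step of the induction only divides by previously produced nonzero rational pivots, so no irrationalities are ever introduced.
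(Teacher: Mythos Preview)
Your proof is correct. The paper does not actually give a proof: it simply cites a textbook algorithm (symmetric row/column elimination on $A_Q$) and observes that $\sigma$ is a product of elementary matrices. Your Lagrange completing-the-square argument is the same procedure seen from the side of the form rather than the matrix: the step ``find $\bm{v}$ with $Q(\bm{v})\neq 0$ and move it to the first basis vector'' is the pivoting step (including the $e_i+e_j$ trick when all diagonal entries vanish), and the substitution $y_1=x_1+a_1^{-1}L$ is exactly the elementary operation that clears the first row and column using the pivot $a_1$. What your write-up buys over the paper's citation is self-containment and an explicit reason why the transformation stays rational (only division by nonzero rational pivots); what the paper's description buys is a direct link to an implementable algorithm and the immediate observation that $\sigma$ factors into elementary matrices.
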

A diagonalisation algorithm is described in~\cite[Algorithm 12.1]{schaum2009}, 
see also ``diagonalisation using row/column operations'' in~\cite[Chapter 7, 2.2]{treil2004laWrong}. %
The idea, as presented in~\cite{treil2004laWrong}, is to perform row operations on the matrix~$Q$. Different from the usual Gauss--Jordan elimination, the analogous column operations
are performed after each row operation. We emphasise that the change-of-basis matrix~$\sigma$ is invertible as a product of elementary matrices.	
\begin{remark}[Degeneracy]
	Let $A_D := \diag{a_1, \dots, a_d}$ be the diagonal matrix of the quadratic form~$D$ as in~\cref{rat-diag}. The product $a_1\cdots a_d$ is zero if and only if the initial quadratic form~$Q$ is degenerate.
\end{remark}

\begin{proposition}\label{reduce-to-diag}	
	Let $Q_1$ and $Q_2$ be two equivalent $d$-ary quadratic forms.
	If there exists a linear loop~$\loopie = \langle M, \bm{s}\rangle$ with invariant $Q_2 = c$ for a constant~$c \in \QQ$,
	then $Q_1 = c$ is an invariant of the linear loop~$\loopie' = \langle \sigma M \sigma^{-1}, \sigma \bm{s} \rangle$. Here, $\sigma \in \GL_d(\QQ)$ is a change-of-basis matrix 
	such that $Q_2 (\bm{x}) = Q_1 (\sigma\cdot\bm{x})$.
\end{proposition}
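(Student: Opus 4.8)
The statement is really a transport-of-structure observation, so the plan is to track how the orbit of the conjugated loop relates to the orbit of the original one, and then substitute into the defining relation of the equivalence $Q_1 \sim Q_2$. Write $\bm{x}(n) = M^n \bm{s}$ for the vector sequence of $\loopie = \langle M, \bm{s}\rangle$, and $\bm{x}'(n)$ for that of $\loopie' = \langle \sigma M \sigma^{-1}, \sigma\bm{s}\rangle$. The first step is the elementary identity $(\sigma M \sigma^{-1})^n = \sigma M^n \sigma^{-1}$, proved by a one-line induction on $n$ (the middle factors telescope). Applying this to the initial vector gives $\bm{x}'(n) = (\sigma M \sigma^{-1})^n \sigma \bm{s} = \sigma M^n \sigma^{-1} \sigma \bm{s} = \sigma M^n \bm{s} = \sigma\, \bm{x}(n)$ for every $n \geq 0$.

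The second step is to evaluate $Q_1$ along this orbit. By hypothesis $\sigma \in \GL_d(\QQ)$ is chosen so that $Q_2(\bm{y}) = Q_1(\sigma \bm{y})$ for all $\bm{y} \in \QQ^d$ (equivalently, in matrix form, $A_{Q_2} = \sigma^\top A_{Q_1}\sigma$, as recorded after \cref{def-eq-qf}). Taking $\bm{y} = \bm{x}(n)$ yields
\[
  Q_1\bigl(\bm{x}'(n)\bigr) = Q_1\bigl(\sigma\, \bm{x}(n)\bigr) = Q_2\bigl(\bm{x}(n)\bigr) = c,
\]
where the last equality is exactly the assumption that $Q_2 = c$ is an invariant of $\loopie$. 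Since this holds for all $n \geq 0$, the polynomial $Q_1 - c$ vanishes along the orbit of $\loopie'$, i.e.\ $Q_1 = c$ is an invariant of $\loopie'$, which is the claim.

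I do not expect any genuine obstacle here: the content is bookkeeping with a single conjugation identity. The only point that needs care is the \emph{direction} of the change of variables — one must use that the equivalence sends representations of $Q_2$ to representations of $Q_1$ via $\bm{y} \mapsto \sigma \bm{y}$, which is why the new initial vector is $\sigma\bm{s}$ and the new update is the conjugate $\sigma M \sigma^{-1}$ (rather than $\sigma^{-1} M \sigma$). As a concluding remark it is worth noting that, because $\sigma$ is a bijection of $\QQ^d$, the map $n \mapsto \sigma$ carries $\mathcal{O}_{\bm{x}}$ bijectively onto $\mathcal{O}_{\bm{x}'}$; hence $\loopie'$ is non-trivial precisely when $\loopie$ is, so the reduction also preserves the non-triviality requirement of \cref{prob:synthesis}.
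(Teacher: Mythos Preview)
Your argument is correct and follows essentially the same route as the paper: establish $(\sigma M\sigma^{-1})^n\sigma\bm{s}=\sigma M^n\bm{s}$ and then invoke the equivalence $Q_2(\bm{y})=Q_1(\sigma\bm{y})$, together with the observation that $\sigma$ is a bijection so non-triviality is preserved. The only difference is cosmetic---the paper phrases the computation in matrix form via $A_{Q_2}=\sigma^\top A_{Q_1}\sigma$ rather than functionally---and there is a small typo in your final sentence (``the map $n\mapsto\sigma$'' should read something like ``the map $\bm{v}\mapsto\sigma\bm{v}$'').
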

\begin{proof}
	If $(M^n\bm{s})^\top A_{Q_2}(M^n\bm{s}) = c$ for all $n\geq 0$, then
	\begin{multline*}
		\left((\sigma M \sigma^{-1})^n\sigma \bm{s}\right)^\top A_{Q_1} \left((\sigma M \sigma^{-1})^n\sigma \bm{s}\right) = \left(\sigma M^n\bm{s}\right)^\top A_{Q_1}\left(M^n\bm{s}\right) \\
		= \bm{s}^\top (M^n)^\top \sigma^\top A_{Q_1}\sigma M^n\bm{s} = \bm{s}^\top (M^n)^\top A_{Q_2} M^n\bm{s} = (M^n\bm{s})^\top  A_{Q_2} M^n\bm{s} = c
	\end{multline*} 
for all $n\geq 0$ as well.
We emphasise that $\sigma$ is a bijection from $\QQ^d$ to itself, so
the reduction described here preserves the infiniteness of loop orbits.
\end{proof}
We conclude from~\cref{rat-diag,reduce-to-diag} that for a general quadratic form~$Q$, 
a linear loop with an invariant $Q(\bm{x})=c$ exists if and only if 
a linear loop exists for an invariant~$D(\bm{x})=c$, where~$D$ is an equivalent diagonal form. 

\subsection{Diagonal Quadratic Forms} 
In this subsection we consider diagonal quadratic forms $a_1x_1^2 + \dots + a_d x_d^2 = c$, where $a_1, \dots, a_d, c \in \QQ$ as in \cref{eq:pure-quadr}.
If the equation is homogeneous; that is, $c = 0$, then loop synthesis reduces to the problem of searching for a rational solution~$\bm{\alpha} = (\alpha_1, \dots, \alpha_d)$.
Indeed, a loop with a matrix $\lambda \cdot I_d$ (scaling each variable by~$\lambda \in \QQ \setminus\{-1,0,1\}$) and the initial vector $\bm{\alpha}$ is a non-trivial linear loop satisfying the invariant $Q(\bm{x}) = 0$.

From \cref{sec:eqsolv}, we know how to generate a solution (or
prove there is no solution) to~\cref{eq:pure-quadr} in its general form,
also with~$c \neq 0$. The bottleneck of
loop synthesis in \cref{prob:synthesis} is thus finding an update matrix~$M$ for the linear loop.
En route to solving this issue, we state the following corollary of \autoref{bin-solutions}.

\begin{corollary}\label{bin-loop-or-nothing}
	If an equation $ax^2+by^2=c$ with $a, b \in \QQ \setminus{0}$ has infinitely many rational solutions different from $(0,0)$, then there exists a non-trivial linear loop with polynomial invariant $ax^2+by^2=c$.
\end{corollary}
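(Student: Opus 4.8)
The plan is to reuse, as the loop's update matrix, exactly the matrix~$R$ that already appears in the proof of \cref{bin-solutions}, and to take any nontrivial rational solution of $ax^2+by^2=c$ as the loop's initial vector. So the candidate loop essentially writes itself; the work is in checking it has the two required properties.

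First I would note that the hypothesis supplies at least one rational solution $\bm{v} = (x_0,y_0)^\top \in \QQ^2 \setminus \{\bm{0}\}$ (indeed, by \cref{bin-solutions} the existence of one nontrivial solution is equivalent to the existence of infinitely many, so the hypothesis could be stated with ``one'' in place of ``infinitely many''). Next, applying \cref{bin1sol} to the pair $a,b$ gives a rational Pell solution $(\alpha,\beta)$ with $\alpha \notin \{0,\pm\tfrac12,\pm 1\}$ and $\beta \neq 0$, and I set
\[
 R := \begin{pmatrix} \alpha & -\tfrac{b}{a}\beta \\ \beta & \alpha \end{pmatrix},
\]
precisely as in the proof of \cref{bin-solutions}; since $a,b\in\QQ\setminus\{0\}$ we have $R \in \GL_2(\QQ)$, and $\det R = \alpha^2 + \tfrac{b}{a}\beta^2 = 1$, so $R$ is invertible. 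The claimed loop is $\loopie = \langle R, \bm{v}\rangle$.

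It then remains to verify two things. Invariance: a direct computation (already implicit in \cref{bin-solutions}) shows that if $ax^2+by^2=c$ then $R\,(x,y)^\top$ satisfies the same equation, the key cancellation being $a\alpha^2 + b\beta^2 = a(\alpha^2 + \tfrac{b}{a}\beta^2) = a$; by induction on $n$ this gives $a\,x_1(n)^2 + b\,x_2(n)^2 = c$ for all $n\ge 0$, so $ax^2+by^2=c$ is a polynomial invariant of $\loopie$. Non-triviality: if $R^{n+k}\bm{v} = R^n\bm{v}$ for some $n\ge 0$, $k\ge 1$, then invertibility of $R$ forces $(R^k-I)\bm{v}=\bm{0}$ with $\bm{v}\neq\bm{0}$, so $1$ is an eigenvalue of $R^k$ and hence some eigenvalue of~$R$ is a root of unity — which is exactly what the proof of \cref{bin-solutions} rules out, because $\alpha$ was chosen to avoid the Niven values. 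Thus the orbit $\mathcal{O}_{\bm{x}} = \{R^n\bm{v} : n\ge 0\}$ is infinite and $\loopie$ is non-trivial, as required.

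I do not anticipate a real obstacle: all the substance has been front-loaded into \cref{bin1sol,bin-solutions}. The only step requiring a little care is the non-triviality argument, where one must combine $\bm{v}\neq\bm{0}$ with the invertibility of~$R$ to deduce that a repeated orbit entry would yield a root-of-unity eigenvalue of~$R$; everything else is a straightforward re-reading of the earlier proof.
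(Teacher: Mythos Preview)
Your proposal is correct and follows exactly the paper's approach: the paper's proof is a one-sentence pointer to the construction in \cref{bin-solutions}, and you have simply unpacked that construction, taking the same matrix~$R$ and a nontrivial solution~$\bm{v}$ as the loop $\langle R,\bm{v}\rangle$. Your write-up is in fact more careful than the paper's, as you make explicit the invertibility of~$R$ (via $\det R=1$) needed to pass from $R^{n+k}\bm{v}=R^n\bm{v}$ to a root-of-unity eigenvalue.
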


\begin{proof}
We use the construction in the proof of \autoref{bin-solutions}, which demonstrates that the orbit of the linear loop $\langle R, \bm{v}\rangle$ is infinite with polynomial invariant $ax^2+by^2=c$.
\end{proof}

\begin{proof}[Proof of \cref{loop-for-qf}]
	Due to \cref{reduce-to-diag}, we can consider an equation of the form~\eqref{eq:pure-quadr}: \[a_1x_1^2 + \dots + a_d x_d^2 = c.\]
	We describe the loop synthesis procedure in this case. If $d=1$, the equation only has finitely many solutions, hence any loop for~\cref{eq:pure-quadr} is trivial. Hereafter we  assume that $d\geq 2$.
	
	In order to generate an initial vector of the loop for~\cref{eq:pure-quadr}, we exploit the results of~\cref{sec:eqsolv}.
	Either \cref{eq:pure-quadr} has no rational solutions and hence no loop exists, or
	we effectively construct a solution $\bm{\alpha} = (\alpha_1, \dots, \alpha_d) \in \QQ^d$
	using procedure \textproc{solve}.
	Recall that we can guarantee $\alpha_i \neq 0$ for all $i \in \{1,\dots, d\}$ due to~\cref{rep-not-zero}.
	
	Note that some of the coefficients $a_i$, $i \in \{1, \dots, d\}$, may be zero if the original quadratic form $Q$ is degenerate.
	We have to consider the case when all coefficients but one are~0, separately.
	That is, $a_1x_1^2 + 0x_2^2 + \dots + 0x_d^2 = c$.
	For this form, a solution exists if and only if $c/a_1$ is a square of a rational number.
	Subsequently, if a solution $\bm{\alpha}$ is found, set $M:= \diag{1, 2, \dots, 2}$ to be a diagonal update matrix.
	Since $d \geq 2$, we guarantee that the orbit of the linear loop~$\langle M, \bm{\alpha}\rangle$ is infinite.
	
	Without loss of generality, we now assume $a_1 \neq 0$ and $a_2 \neq 0$.
	Define $\gamma := a_1\alpha_1^2+a_2\alpha_2^2$, then the equation
\(		a_1x_1^2+a_2x_2^2 = \gamma\)
has a non-trivial solution $(\alpha_1, \alpha_2)$.
	
	From \cref{bin-loop-or-nothing}, there exists a matrix \(R\in\QQ^{2\times 2}\) that preserves the value of the quadratic form \(a_1x_1^2+a_2x_2^2\). 
	This matrix can be constructed as in the proof of~\cref{bin-loop-or-nothing} by considering the equation $x_1^2+\frac{a_2}{a_1}x_2^2 = 1$. 
Let \(M\) be the matrix given by the direct sum
	\[ R \oplus I_{d-2} =
		\left(\begin{array}{c|c} 
			R & 0 \\
			\hline
			0 & I_{d-2}
		\end{array}\right)
	\]
	where $I_n$ is an identity matrix of size $n$.
	
	A desired loop is $(M, \bm{\alpha})$ as for each $n \geq 0$, $M^n\bm{\alpha}$ satisfies \cref{eq:pure-quadr}.
	The loop is non-trivial because its orbit, restricted to $x_1, x_2$, is infinite.
\end{proof}

The process of synthesising a loop for the quadratic invariant $Q(x_1, \dots,
x_n) = c$ is summarised in \cref{alg:lin}.
The algorithm starts with a diagonalisation step, proceeds with finding a loop for an equation of the form~\eqref{eq:pure-quadr}, and applies the inverse transformation to obtain a linear loop for the initial invariant.
Whenever \cref{alg:lin} returns a loop, this loop is linear.

\begin{algorithm}[t] 
		\caption{Synthesise a linear loop satisfying a given quadratic form equation}		\label{alg:lin}
	\begin{algorithmic}[1]
	\Require %
	quadratic form \(Q\) in \(d\) variables and \(c\in\QQ\).  Assert \(d\geq 2\).
	\Function{linLoop}{$Q,s$}
	\State $\langle M, s \rangle :=$ \textsc{undefined}.
		\State compute a rational diagonalisation for~$Q(\bm{x})$: 
		a $\sigma \in \GL_d(\QQ)$ such that $Q'(\sigma\bm{x}) = Q(\bm{x})$ with~$Q'$ a diagonal quadratic form
		\State rewrite the equation $Q' = c$ as $a_1x_1^2 + \dots + a_dx_d^2 =c$ with $a_1, \dots, a_r \neq 0$ and $a_{r+1} = \dots = a_d = 0$
 		\State let $\bm{\alpha}:= (\alpha_1, \dots, \alpha_r, 1, \dots, 1)^\top \in \QQ^d$, where $(\alpha_1, \dots, \alpha_r):=$  \textproc{solve}$(a_1,\dots,a_r,c)$
 		
 		\Comment{for \textproc{solve} see \autoref{alg:isotropic}.}
		\If{$r=1$ and $\bm{\alpha} \neq $ ``\textsc{no solutions}'' }
			\State $M:=\diag{1,2, \dots, 2}$.
		\ElsIf{ $\bm{\alpha}=$ ``\textsc{no solutions}''  }
				\State \Return ``\textsc{no loop}''. 
		\Else
			\State compute a solution $(y_1, y_2)$ of $x_1^2+\frac{a_2}{a_1}x_2^2 = 1$.
			\Comment{see~\cref{bin1sol}}
			\State \(M:= R \oplus I_{d-2}\), where $R = \left(\begin{smallmatrix}
			y_1 & -\frac{a_2}{a_1} y_2\\
			y_2 & y_1\\
		\end{smallmatrix}\right)$.
		\EndIf
		\State \Return $\langle \sigma^{-1}M\sigma, \sigma^{-1}\bm{\alpha} \rangle$.
	\EndFunction
	\end{algorithmic}
\end{algorithm}

\section{Arbitrary Quadratic Equations: Affine Loops}
\label{sec:qe}
In this section, we leave the realm of quadratic forms and consider general quadratic invariants that may have a linear part. 
Any quadratic equation can be written in terms of a quadratic form~$Q$, a linear form~$L$, and a constant term~$c$:%
\begin{equation}\label{eq:quadr}Q(x_1, \dots, x_d) + L(x_1, \dots, x_d) = c.\end{equation}

On our way to a complete solution %
of~\cref{prob:synthesis} for arbitrary quadratic equations, we carefully analyse~\cref{eq:quadr}.
A standard technique (see e.g.~\cite[Proposition 1]{grunewald1981solveZ}) allows to reduce~\cref{eq:quadr} with a non-degenerate quadratic form $Q$ to~\cref{qf-c} considered in~\cref{sec:qf}.
We now give the details of this reduction and describe how to synthesise an affine loop for an invariant~\eqref{eq:quadr} in the non-degenerate case.
Subsequently, we close the gap by discussing the case when $Q$ is
degenerate. Using \cref{rem:LinAffLoop}, our results on affine loop synthesis
imply then linear loop synthesis.

\subsection{Non-Degenerate Quadratic Forms} \label{ssec:nondegeneratequadratic}

For convenience, we rewrite the equation in the matrix-vector form:
\(\bm{x}^\top  A_Q \bm{x} + \bm{b}^\top  \bm{x} - c = 0\).
Here, $A_Q$ is the non-singular matrix of the quadratic form $Q$, and $\bm{b}$ is the vector of the linear form.
Let $\delta := \det{A_Q} \neq 0$ and $C$ be the cofactor matrix of $A_Q$, i.e.,
\(A_Q \cdot C = C \cdot A_Q = \delta \cdot I_d\).
We further define $\bm{h}:= C \cdot \bm{b}$ and $\tilde{c} = 4\delta^2c + Q(\bm{h})$. 
It can be checked directly that \begin{equation}\label{legendre-transf}
	Q(2\delta \cdot \bm{x} + \bm{h}) = \tilde{c} \Leftrightarrow Q(\bm{x}) + L(\bm{x})= c.
\end{equation}
In words, every equation of the form~\cref{eq:quadr} can be reduced to an equation of the form $Q(\bm{y}) = \tilde{c}$ \emph{by an affine transformation} $f$ that maps each $\bm{x} \in \QQ^d$ to $2\delta \cdot \bm{x} + \bm{h} \in \QQ^d$.
As such, this means that solutions of \cref{eq:quadr} under the non-degeneracy assumption are in a one-to-one correspondence with representations of~$\tilde{c}$ for~$Q$.
\begin{proposition}\label{loop-for-qe-nondeg}
	Let $Q$ be a non-degenerate quadratic form and $L$ a linear form, both in $d \geq 2$ variables. 
	Define $\delta:= \det(A_Q)$, $\bm{h}$ and $\tilde{c}$, as in the discussion above.
	The following are equivalent:
	\begin{enumerate}
		\item There exists a linear loop $\langle M, \bm{s} \rangle$ satisfying the invariant $Q(\bm{x}) = \tilde{c}$.
		\item There exists an affine loop
		\[\langle M, \frac{1}{2\delta}\left(\bm{s}- \bm{h}\right),\frac{1}{2\delta}\left(M - I_d\right)\bm{h}\rangle,\] 		
		satisfying the invariant $Q(\bm{x}) + L(\bm{x}) = c$.
	\end{enumerate}
\end{proposition}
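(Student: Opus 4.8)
The plan is to exploit the affine change of variables $f\colon \bm{x}\mapsto 2\delta\bm{x}+\bm{h}$ together with the pointwise identity~\eqref{legendre-transf}, which has already been verified in the discussion preceding the proposition. Since $Q$ is non-degenerate, $\delta=\det(A_Q)\neq 0$, so $f$ is a bijection of $\QQ^d$ with inverse $\bm{y}\mapsto\frac{1}{2\delta}(\bm{y}-\bm{h})$; in particular all the vectors and the matrix appearing in item~(2) are rational, since $\delta$, $\bm{h}$ and $\tilde{c}$ are. The key step is to identify the orbit of the affine loop in item~(2). Writing $\bm{x}(n)$ for its $n$th vector, I would show by induction on $n$ that
\[
	\bm{x}(n)=\frac{1}{2\delta}\bigl(M^n\bm{s}-\bm{h}\bigr),
	\qquad\text{equivalently}\qquad 2\delta\,\bm{x}(n)+\bm{h}=M^n\bm{s}.
\]
The base case $n=0$ is immediate from the definition of the initial vector $\frac{1}{2\delta}(\bm{s}-\bm{h})$, and the inductive step is the short calculation
\[
	\bm{x}(n+1)=M\bm{x}(n)+\frac{1}{2\delta}(M-I_d)\bm{h}
	=\frac{1}{2\delta}\bigl(M(M^n\bm{s}-\bm{h})+(M-I_d)\bm{h}\bigr)
	=\frac{1}{2\delta}\bigl(M^{n+1}\bm{s}-\bm{h}\bigr),
\]
which uses exactly the prescribed translation vector $\frac{1}{2\delta}(M-I_d)\bm{h}$.

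With this orbit identity in hand, both implications follow at once by instantiating~\eqref{legendre-transf} at $\bm{x}=\bm{x}(n)$: since $2\delta\,\bm{x}(n)+\bm{h}=M^n\bm{s}$, the identity reads $Q(M^n\bm{s})=\tilde{c}\iff Q(\bm{x}(n))+L(\bm{x}(n))=c$. Therefore the linear loop $\langle M,\bm{s}\rangle$ satisfies $Q(\bm{x})=\tilde{c}$ on every iterate (item~1) if and only if the affine loop satisfies $Q(\bm{x})+L(\bm{x})=c$ on every iterate (item~2). Note that the converse direction reuses the same identity $\bm{x}(n)=\frac{1}{2\delta}(M^n\bm{s}-\bm{h})$ rather than re-deriving anything, so there is genuinely a single computation to carry out.

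I do not expect a real obstacle here: the mathematical content of the proposition is packaged into the already-established identity~\eqref{legendre-transf}, and what remains is bookkeeping. The two points that deserve care are the induction that pins down the orbit of the affine loop (so that the invariant can be transported term by term) and the observation that $f$ is a bijection of $\QQ^d$, which implies that the affine loop of item~(2) is non-trivial exactly when the linear loop of item~(1) is. The latter is what makes this reduction usable inside the synthesis procedure of \cref{sec:qe}, even though it is not part of the stated equivalence.
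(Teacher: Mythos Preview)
Your proposal is correct and follows essentially the same approach as the paper: both establish by induction that the affine loop's orbit satisfies $\bm{x}(n)=\frac{1}{2\delta}(M^n\bm{s}-\bm{h})$ and then transport the invariant via the identity~\eqref{legendre-transf}. Your write-up is slightly more explicit about the converse direction and about the preservation of non-triviality under the bijection $f$, which the paper leaves implicit, but there is no substantive difference in strategy.
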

\begin{proof}
	Start with the first assumption.
	For all $n \geq 0$, it holds $Q(M^n\bm{s}) = \tilde{c}$.
	Equivalently, \[Q(f^{-1}\left( M^n\bm{s}\right)) + L(f^{-1}\left( M^n\bm{s}\right))= c, \; \text{or} \;
	Q\left(\frac{1}{2\delta}(M^n\bm{s} - \bm{h})\right) + L\left(\frac{1}{2\delta}(M^n\bm{s} - \bm{h})\right) = c\]
	for all $n \geq 0$.	

	On the other hand, let $\bm{x}(n)$ be the variable vector after the $n$th iteration of an affine loop from the statement.
	We prove by induction that $\bm{x}(n) = \frac{1}{2\delta}(M^n\bm{s} - \bm{h})$.
	The base case is true since the initial vector of the affine loop is $\frac{1}{2\delta}(\bm{s} - \bm{h}) = \frac{1}{2\delta}(M^0\bm{s} - \bm{h})$.
	Now, assume that $\bm{x}(k) = \frac{1}{2\delta}(M^k\bm{s} - \bm{h})$ for an arbitrary $k \geq 0$.
	Then, by applying the loop update once, we have
	\begin{multline*}
		\bm{x}(k+1) =M \cdot \left(\frac{1}{2\delta}(M^k\bm{s} - \bm{h})\right) + \frac{1}{2\delta}\left(M - I_d\right)\bm{h} \\
		= \frac{1}{2\delta}\left(M^{k+1}\bm{s} - M\bm{h} + M\bm{h} - \bm{h} \right) =\frac{1}{2\delta}\left(M^{k+1}\bm{s} - \bm{h} \right),
	\end{multline*}
and the inductive step has been shown.	
By the above work, we conclude that $Q(\bm{x}(n)) + L(\bm{x}(n)) = c$ holds for all~$n\geq 0$.
\end{proof}

\begin{example}\label{ex:nondeg-affine}
	Consider an invariant $p(x,y) := x^2+y^2-3x-y = 0$.
		After an affine change of coordinates $f(x,y) = (2x-3,2y-1)$, it becomes
		$x^2+y^2=10$ 
		(that corresponds to $\delta = 1$, $\bm{h} = (-3,-1)^\top$, $\tilde{c} = 10$).
		There exists a linear loop for this equation:
		\[M = \begin{pmatrix}
			\frac{3}{5} & -\frac{4}{5}\\
			\frac{4}{5} & \frac{3}{5}
		\end{pmatrix} \text{ and } \bm{s} = 
	\begin{pmatrix}	1 \\ -3 \end{pmatrix}.\]
	Next, compute the components of an affine loop.
	The update matrix is~$M$, whilst the
	initial and translation vectors are
	\[\frac{1}{2} \left[\begin{pmatrix}	1 \\ -3 \end{pmatrix} - \begin{pmatrix}	-3 \\ -1 \end{pmatrix}\right] = \begin{pmatrix}	2 \\ -1 \end{pmatrix} \hfill \text{and} \hfill \frac{1}{2} \left[\begin{pmatrix}
		\frac{3}{5} & -\frac{4}{5}\\
		\frac{4}{5} & \frac{3}{5}
	\end{pmatrix} -  \begin{pmatrix}
	1& 0\\
	0 & 1
\end{pmatrix}\right]\begin{pmatrix}	-3 \\ -1 \end{pmatrix}= \begin{pmatrix}	1 \\ -1 \end{pmatrix},
\]	
respectively.
The resulting affine loop is non-trivial with invariant $p(x,y) = 0$ due to~\cref{loop-for-qe-nondeg}:
		\begin{equation*}
	 \begin{pmatrix}
		x \\ y
	\end{pmatrix}\leftarrow  \begin{pmatrix}
	2\\ -1
	\end{pmatrix};\ \textrm{while}\ \star\ \textrm{do}\ \begin{pmatrix}
			x \\ y
		\end{pmatrix}\leftarrow 
	\begin{pmatrix}
		\frac{3}{5}x-\frac{4}{5}y + 1 \\ \frac{4}{5}x+\frac{3}{5}y - 1
	\end{pmatrix}.
\end{equation*}
\end{example}
\subsection{Degenerate Quadratic Forms} \label{ssec:degeneratequad}
Let $r < d$ be the rank of~$A_Q$.
There exist $k:=d-r$ linearly independent vectors~$\bm{v}_1, \dots, \bm{v}_k\in \QQ^d$ such that
$A_Q \cdot \bm{v}_i = \bm{0}$.
Construct a matrix $\tau \in \GL_d(\QQ)$ such that $\bm{v}_1, \dots, \bm{v}_k$ constitute its first columns.
It follows that every non-zero entry $(M)_{ij}$ of a matrix $M:= \tau^\top A_Q\tau$ %
is located in the bottom right corner, that is, $i >k$ and $j > k$.
We rewrite $Q(\tau\bm{x}) = \tilde{Q}(x_{k+1}, \dots, x_d)$ and $L(\tau\bm{x}) =  \tilde{L}(x_{k+1}, \dots, x_d) + \lambda_1x_1 + \dots + \lambda_kx_k$ in~\cref{eq:quadr}.
Now we have:
\begin{equation} \label{eq:QLA} \tilde{Q}(x_{k+1}, \dots, x_d) +  \tilde{L}(x_{k+1}, \dots, x_d) = c - \lambda_1x_1 - \dots - \lambda_kx_k,\end{equation}
where $\tilde{Q}$ is a non-degenerate quadratic form of~$r$ variables.

In the rest of this subsection, we are concerned with finding an affine loop satisfying~\cref{eq:QLA}.
We emphasise that such a loop~$\langle M, \bm{s}, \bm{t} \rangle$ exists if and only if~$\langle \tau M \tau^{-1}, \tau\bm{s}, \tau\bm{t} \rangle$ satisfies~\cref{eq:quadr}. The proof is due to~$\tau$ inducing an automorphism of~$\QQ^d$, cf.~\cref{reduce-to-diag}.

If $\lambda_1 = \dots = \lambda_k = 0$, we have arrived at an instance of~\cref{eq:quadr} with a non-degenerate quadratic form and fewer variables.
Let $\delta$ be the determinant of~$\tilde{Q}(x_{k+1}, \dots, x_d)$ and, as in the non-degenerate setting, define an affine transformation~$f$ on the subset of variables~$\{x_{k+1}, \dots, x_d\}$.
The constant~$\tilde{c}$ and the vector~$\bm{h} \in \QQ^r$ are defined similarly to their non-degenerate setting counterparts.

After the change of coordinates that corresponds to~$f$, we have
\begin{equation}\label{eq:with0}
	0x_1^2+\dots+0x_k^2+\tilde{Q}(x_{k+1}, \dots, x_d) = \tilde{c}.
\end{equation}
Recall (e.g.~from the proof of~\cref{loop-for-qf}) that once \cref{eq:with0} with $k\geq 1$ has a solution, there is a non-trivial linear loop satisfying the polynomial invariant defined by the equation.
Now, let $\langle M,\bm{s}\rangle$ be a linear loop for~\cref{eq:with0},
where $\bm{s} = (s_1, \dots, s_d)^\top$.
In fact, one can assume $M:=\diag{2,\dots, 2,1, \dots, 1}$ with $k$ twos and $r$ ones.
Define $\bm{s'}:=\frac{1}{2\delta}\left(\bm{s} - \left(\begin{smallmatrix}
	\bm{0} \\ \bm{h}
\end{smallmatrix}\right)\right)$.
It is not hard to see that a non-trivial \emph{linear} loop~$\langle M, \bm{s}' \rangle$ satisfies $\tilde{Q}(x_{k+1}, \dots, x_d) +  \tilde{L}(x_{k+1}, \dots, x_d) = c$
if and only if $\tilde{Q}(x_{k+1}, \dots, x_d) = \tilde{c}$ has a solution~$(s_{k+1}, \dots, s_d)$.

From now on, we assume that $k\geq 1$ is the number of non-zero $\lambda_i$'s on the right-hand side of~\cref{eq:QLA}. We show next that the loop synthesis question has a positive answer.
\begin{proposition}[Affine Loops for Quadratic Forms]\label{loop-for-qe-deg}
	 Given a quadratic equation of the form~\eqref{eq:QLA}, there exists a non-trivial affine loop in variables $x_1, \dots, x_d$ for which said equation is a polynomial invariant.
\end{proposition}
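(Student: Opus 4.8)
The plan is to strip \eqref{eq:QLA} down, by affine changes of coordinates and by freezing variables, to a single two-variable equation $a w^2 = c_* + \mu z$ with $a,\mu\in\QQ\setminus\{0\}$, exhibit a concrete non-trivial affine loop for it, and lift the loop back. Two easy facts make such reductions harmless. First, the affine analogue of \cref{reduce-to-diag}: if $\varphi\colon\QQ^d\to\QQ^d$ is an affine bijection and an affine loop $\langle M,\bm s,\bm t\rangle$ has $P\circ\varphi=0$ as invariant, then the conjugated affine loop (with update matrix the linear part of $\varphi$ times $M$ times its inverse, and correspondingly transported initial and translation vectors) has $P=0$ as invariant, with an infinite orbit iff the original has one. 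Second, a variable can be ``frozen'' by giving it the identity update and an arbitrary constant initial value; this only specialises the invariant to an affine subspace and can never increase the orbit.

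\emph{Reduction to a two-variable equation.} As some $\lambda_i\neq 0$, permute the indices $1,\dots,k$ so that $\lambda_1\neq 0$ and freeze $x_2,\dots,x_k$ at $0$, making the right-hand side of \eqref{eq:QLA} equal to $c-\lambda_1x_1$. Diagonalise $\tilde Q$ by \cref{rat-diag} and then complete the square in the (diagonalised) $\bm y$-variables via the substitution of \eqref{legendre-transf}: in suitable coordinates the equation becomes $a_1w_1^2+\dots+a_rw_r^2=\tilde c_0+\mu x_1$, where all $a_i\neq 0$ because $\tilde Q$ is non-degenerate, $\tilde c_0\in\QQ$, and $\mu$ is a non-zero rational multiple of $\lambda_1$ (namely $-4\delta^2\lambda_1$, with $\delta$ the determinant of the diagonalised form); the last claim holds because the shift vector $\bm h$ in \eqref{legendre-transf} does not depend on the target value, so the new target is an affine function of the old one. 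Finally freeze $w_2,\dots,w_r$ at $0$, obtaining the single equation $a w^2=c_*+\mu z$ with $a:=a_1\neq 0$, $\mu\neq 0$, $z:=x_1$, $c_*:=\tilde c_0$.

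\emph{The two-variable loop and the lift.} For $a w^2=c_*+\mu z$ take the affine loop on $(w,z)$ with update $w\leftarrow 2w$, $z\leftarrow 4z+3c_*/\mu$ and initial vector $(1,\,(a-c_*)/\mu)$. A one-line induction shows that $z(n)=(a\,w(n)^2-c_*)/\mu$ is maintained --- indeed $z(n+1)=4z(n)+3c_*/\mu=(4a\,w(n)^2-c_*)/\mu$ while $a\,w(n+1)^2=4a\,w(n)^2$ --- so $a w^2=c_*+\mu z$ holds after every iteration; and $w(n)=2^n$ takes infinitely many values, so the loop is non-trivial. Re-inserting the frozen variables with identity updates and the appropriate constant initial values yields a $d$-variable affine loop in the final coordinates, and conjugating it by the inverse of the affine coordinate change composed above (using the first fact, together with the $\tau$-reduction already recorded before the statement) produces the desired non-trivial affine loop in $x_1,\dots,x_d$ with invariant \eqref{eq:QLA}.

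The only genuine obstacle is the one addressed in the reduction step: naively freezing variables in \eqref{eq:QLA} can annihilate the quadratic part (for instance, freezing $x_{k+2}$ in $\tilde Q=x_{k+1}x_{k+2}$ leaves a purely linear equation), so one must first bring $\tilde Q$ into a shape --- here, a diagonal form via \cref{rat-diag} --- in which a genuine square term $a w_1^2$ with $a\neq 0$ survives the freezing. The remaining points (that completing the square acts affinely on the target value, and that the induced $z$-update is affine) are short computations of the same flavour as in the proof of \cref{loop-for-qe-nondeg}.
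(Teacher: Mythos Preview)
Your proof is correct. The two-variable loop you exhibit for $aw^2 = c_* + \mu z$ is exactly the paper's Case~3 loop specialised to $b=0$, and your freezing/conjugation manoeuvres are sound. One small slip: the proposition is stated for \eqref{eq:QLA} itself, so the ``$\tau$-reduction already recorded before the statement'' plays no role here---you only need to undo the diagonalisation of $\tilde Q$ and the completing-the-square shift, both acting on the $y$-block.

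Where you differ from the paper is in strategy. The paper performs a three-way case split ($k>1$; $r>1$; $k=r=1$) and handles each by a bespoke construction, whereas you uniformly freeze down to a single parabola and invoke one explicit loop. Your route is shorter and avoids the case analysis entirely. What it gives up is information the paper's proof extracts along the way: in Cases~1 and~2 the paper produces a \emph{linear} loop (not merely affine), a strengthening that feeds into the discussion in \S\ref{sec:discussion} on bounding the number of loop variables by~$d$ rather than~$d+1$. Your argument yields only an affine loop in every case, which is precisely what the proposition asks for, but no more.
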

\begin{proof}
	Since $k \geq 1$ and $\lambda_1 \neq 0$, the right-hand side $c - \sum_{i=1}^k \lambda_i x_i$ represents every rational number.
	Set the values of $x_{k+1}, \dots, x_d$ to some fixed values $\bm{\alpha} = (\alpha_1, \dots, \alpha_{d-k})$ such that $\bm{\alpha} \neq \bm{0}$
	and solve the equation for $x_1, \dots, x_k$ attaining a vector of values $\bm{\beta} = (\beta_1, \dots, \beta_k)$. 
	We have $ \tilde{Q}(\bm{\alpha}) +  \tilde{L}(\bm{\alpha})= A(\bm{\beta})$, where $A(x_1, \dots, x_k) := c - \lambda_1 x_1 - \dots - \lambda_k x_k$.

	We introduce the following case distinction.
	\begin{description} 
		\item[\hypertarget{item:1}{Case 1.}] $k > 1$; 
		\item[\hypertarget{item:2}{Case 2.}] $r > 1$ and so $\tilde{Q}$ is a non-degenerate quadratic form of at least~$2$ variables; 
		\item[\hypertarget{item:3}{Case 3.}] $r = 1$ and $k = 1$; that is, \cref{eq:QLA} has the form $ax^2+bx = c - dy$, $d\neq 0$.
	\end{description}
	In the rest of the proof, we show that for all these cases, a non-trivial affine loop satisfies~\cref{eq:QLA} and hence, the invariant of~\cref{eq:quadr}. Moreover, in Cases \hyperlink{item:1}{1} and \hyperlink{item:2}{2} there exist \emph{linear} loops of this sort.

	In \hyperlink{item:1}{Case 1}, we focus on the vector~$\bm{\beta}$ computed in the previous step.
	Without loss of generality, $(\beta_1, \beta_2) \neq (0,0)$.
	We construct a linear loop that preserves the values of all variables but $\beta_1, \beta_2$.
	To this end, it suffices to notice that a linear transformation of~$\QQ^2$ defined by $(x_1, x_2) \mapsto (2x_1, -\frac{\lambda_1}{\lambda_2}x_1+x_2)$ preserves the value of $\lambda_1x_1 + \lambda_2x_2$.
	The desired linear loop has initial vector~$\bm{s} = (\bm{\beta}, \bm{\alpha} )^\top $ and an update matrix~$M = \left(\begin{smallmatrix}2 & 0\\-\frac{\lambda_1}{\lambda_2} & 1 \end{smallmatrix}\right) \oplus I_{d-2}$.
	
	Let us turn to \hyperlink{item:2}{Case 2} and focus on vector~$\bm{\alpha}$. 
	Clearly, we can now assume $k = 1$.
	Without loss of generality, we shall assume that $\beta_1 \neq 0$.
	Consider the equation
	\begin{equation*}\label{nondeg-eq-trans}
		\tilde{Q}(\bm{x}) + \tilde{L}(\bm{x}) = A(\beta_1)
	\end{equation*}
	over the variables~$\bm{x}$, with~$A(y) = c - \lambda_1y$.
	 Using~\cref{legendre-transf}, we argue that its solutions are related to the representations of a certain number $\tilde{c}$ by~$ \tilde{Q}$.
	  We compute~$\delta$, $\tilde{c}$, $\bm{h}$ for the non-degenerate quadratic form~$\tilde{Q}$, linear form~$\tilde{L}$ and constant~$A(\beta_1)$ such that
	$\tilde{Q}(2\delta \cdot \bm{x} + \bm{h}) = \tilde{c}$.
	From~\cref{loop-for-qf} and, more specifically, its proof, observe that there exists a non-trivial \emph{linear} loop satisfying $ \tilde{Q}(\wc) = \tilde{c}$.
	Indeed, there exists at least one solution of this equation, namely~$f(\bm{\alpha})$.
	Let $\langle M, \bm{s}\rangle$ be a linear loop satisfying $\tilde{Q}(\wc) = \tilde{c}$ with matrix $M\in\QQ^{r\times r}$.
	\cref{loop-for-qe-nondeg} shows that an affine loop 
	\[\mathcal{A}:= \langle M, \frac{1}{2\delta}\left(\bm{s}- \bm{h}\right), \frac{1}{2\delta}\left(M - I_r\right)\bm{h}\rangle,\] 		
	satisfies the invariant $\tilde{Q}(\bm{x}) + \tilde{L}(\bm{x}) = A(\beta_1)$.
	The sequence~$\langle\bm{x}(n)\rangle_{n=0}^\infty$ of $\mathcal{A}$'s variable vectors can be expressed in terms of an augmented matrix  (see \(M'\) in~\cref{sec:prelims})
	associated with the affine transformation \(\bm{x}\mapsto M \bm{x} + \bm{t}\), where $\bm{t} = \frac{1}{2\delta}\left(M - I_r\right)\bm{h}$ and $\bm{s}' = \frac{1}{2\delta}\left(\bm{s}- \bm{h}\right)$:
	\[
	\begin{pmatrix} 1 \\ \bm{x}(n) \end{pmatrix} =
	\left(\begin{array}{c|c} 1 & 0_{1,r} \\ \hline \bm{t}  & M	\end{array}\right)^{\!n}\!\! \begin{pmatrix} 1 \\ \bm{s}' \end{pmatrix}, 
	\]
	satisfies $ \tilde{Q}(\bm{x}) + \tilde{L}(\bm{x}) = A(\beta_1)$ for all~$n\geq 0$. Then, %
	\[
	\begin{pmatrix}y(n) \\ \bm{x}(n) \end{pmatrix} =
	\left(\begin{array}{c|c} 1 & 0_{1,r} \\ \hline \frac{1}{\beta_1}\bm{t}  & M	\end{array}\right)^{\!n} \!\! \begin{pmatrix} \beta_1 \\  \bm{s}' \end{pmatrix} 
	\]
	satisfies $\tilde{Q}(\bm{x}(n)) + \tilde{L}(\bm{x}(n)) = A(y(n))$ as in~\cref{eq:QLA} for all~$n\geq 0$. 
	We denote by \(M_\beta\) the ${d}$-dimensional square matrix in the preceding displayed equation.
Observe that $\langle M_\beta, (\beta_1, \bm{s}')^\top \rangle$ is a \emph{linear} loop satisfying the invariant of~\cref{eq:quadr}.
	
	Finally, we come to the special case, \hyperlink{item:3}{Case 3}, that considers quadratic equations of the form \(ax^2+bx = c - dy\) where $d \neq 0$.
	It suffices to observe that an affine transformation of~$\QQ^2$ defined by $(x,y) \mapsto (2x, 2\frac{b}{d} x +4 y -3 \frac{c}{d})$ preserves the equation \(ax^2+bx = c - dy\).
	We conclude that  \(ax^2+bx = c - dy\) is a polynomial invariant of the affine loop with initial vector~$(1,  \frac{c-a-b}{d})^\top $, translation vector~$( 0, -3\frac{c}{d})^\top $, and update matrix \(\left(\begin{smallmatrix}
		2 & 0 \\
		2\frac{b}{d} & 4
	\end{smallmatrix}\right)\).
\end{proof}
\subsection{The Procedure: Affine Loop Synthesis for Quadratic Invariants}
\begin{theorem}[Affine Loops for Quadratic Equations]\label{loop-for-qe}
	There exists an effective procedure that, given a quadratic
        equation (i.e. invariant) 
	\[Q(x_1, \dots, x_d) + L(x_1, \dots, x_d) = c,\]
	decides whether a non-trivial affine loop satisfying it exists and, if so,
	synthesises a loop.
\end{theorem}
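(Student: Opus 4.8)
The plan is to assemble the decision procedure from the case analysis already developed in \cref{sec:qe}, and to argue that in each branch the procedure either outputs a non-trivial affine loop or correctly certifies that none exists. The key observation for the ``decide'' part is that the initial vector of any affine loop satisfying $Q + L = c$ is itself a rational solution of \cref{eq:quadr}; hence the procedure may return ``no loop'' only when it has detected that either \cref{eq:quadr} has no rational solution, or every solution can give rise to finite orbits.

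First I would dispose of the trivial case $d = 1$: here \cref{eq:quadr} is a univariate (possibly linear) equation with at most finitely many rational solutions, so no non-trivial loop exists and the procedure returns ``no loop''. Assume henceforth $d \geq 2$, and compute $A_Q$ together with its rank $r$. If $r = d$ (non-degenerate case), form $\delta$, $\bm h$, $\tilde c$ and the affine map $f\colon \bm x \mapsto 2\delta\bm x + \bm h$ as in \cref{ssec:nondegeneratequadratic}; by \cref{legendre-transf} the rational solutions of \cref{eq:quadr} are in bijection with the representations of $\tilde c$ by $Q$. Invoke \textproc{linLoop} (\cref{alg:lin}), which calls \textproc{solve} on $Q(\bm y)=\tilde c$: if there is no solution, then \cref{eq:quadr} is unsolvable and we return ``no loop''; otherwise \cref{loop-for-qf} yields a non-trivial linear loop $\langle M,\bm s\rangle$ for $Q(\bm y)=\tilde c$, and \cref{loop-for-qe-nondeg} converts it into the affine loop $\langle M, \tfrac{1}{2\delta}(\bm s-\bm h), \tfrac{1}{2\delta}(M-I_d)\bm h\rangle$ for \cref{eq:quadr}; since $f$ is a bijection of $\QQ^d$, the orbit remains infinite.

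If $r < d$, set $k := d-r$ and apply a rational change of basis $\tau\in\GL_d(\QQ)$ whose first $k$ columns span $\ker A_Q$, bringing \cref{eq:quadr} to the shape \cref{eq:QLA} with a non-degenerate $r$-ary form $\tilde Q$ and coefficients $\lambda_1,\dots,\lambda_k$ on the right-hand side; by the conjugation argument of \cref{reduce-to-diag} (with $\tau$ in place of $\sigma$) a loop for \cref{eq:quadr} exists iff one exists for \cref{eq:QLA}, and the two are effectively interconvertible. Now split on whether all $\lambda_i$ vanish. If they do, apply $f$ to the block $\{x_{k+1},\dots,x_d\}$ to reach \cref{eq:with0}: a rational solution exists iff $\tilde Q=\tilde c$ is solvable (\cref{prop:Isotropic}, via \textproc{solve}), in which case, because $k\geq 1$ supplies a genuinely free coordinate, the diagonal update $\diag{2,\dots,2,1,\dots,1}$ gives a non-trivial linear loop, and otherwise \cref{eq:quadr} is unsolvable and we return ``no loop''. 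If instead some $\lambda_i\neq 0$, then \cref{loop-for-qe-deg} produces a non-trivial affine loop for \cref{eq:QLA} outright (covering $k>1$, $r>1$, and the residual $r=k=1$ case; the further degenerate situations with $r=0$, where \cref{eq:quadr} is linear, are handled directly, e.g.\ by a translation-only loop $\bm x\mapsto\bm x+\bm w$ with $\bm w$ a nonzero kernel vector of the linear form). Finally, pull everything back through $\tau$ (and, if a linear loop is preferred, through \cref{rem:LinAffLoop}).

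The main obstacle is the decision part, not the constructions: one must verify that ``no loop'' is returned precisely when no non-trivial affine loop exists. The converse direction is immediate since an initial vector is a solution; the delicate direction is that solvability of \cref{eq:quadr} together with $d\geq 2$ already suffices for a non-trivial loop. This rests on the fact that the constructions of \cref{loop-for-qf} and \cref{loop-for-qe-deg} always yield infinite orbits (by the Niven-theorem argument ruling out roots of unity among the eigenvalues of the $2\times 2$ rotation block, or by a freely scaled/translated coordinate), combined with the bijectivity of the reductions $f$ and $\tau$, which preserves both solvability and infinitude. Effectiveness follows since rational diagonalisation, kernel computation, the explicit formulas for $\delta,\bm h,\tilde c$, and the subroutine \textproc{solve} (hence \textproc{isIsotropic} and \textproc{findIsotropic}) are all effective.
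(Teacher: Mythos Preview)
Your proposal is correct and follows essentially the same approach as the paper: split on whether $A_Q$ is non-degenerate, in the non-degenerate case use the affine map $f$ and \cref{loop-for-qe-nondeg} to reduce to \cref{loop-for-qf}, and in the degenerate case pass via $\tau$ to \cref{eq:QLA} and invoke \cref{loop-for-qe-deg} (or the $\lambda_i=0$ sub-case), pulling back through $\tau$ at the end. You add slightly more care on edge cases ($d=1$, $r=0$) and on justifying the decision aspect via \cref{cor:loop-exists}-style reasoning, but the decomposition and key lemmas are identical to the paper's.
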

The theorem is essentially proved in \cref{loop-for-qe-nondeg,loop-for-qe-deg}.
If the quadratic form is non-degenerate, \cref{loop-for-qe-nondeg} reduces the search for an affine loop to the search for a linear loop satisfying $Q(x_1, \dots, x_d) = \tilde{c}$.
The solution of this problem was given in~\cref{loop-for-qf}.
If the quadratic form is degenerate, we consider \cref{eq:QLA}. 
If at least one of the $\lambda_i$'s is non-zero, a loop exists, as shown by the ad~hoc constructions of~\cref{loop-for-qe-deg}.
In two of the three cases there, the loop is not just affine, but linear.
Otherwise, if all of the $\lambda_i$'s are zero, we obtain a linear loop by essentially testing whether a solution to an equation $\tilde{Q}(x_1, \dots, x_d) = \tilde{c}$ exists. 
Finally, in order to obtain an affine loop satisfying the original equation, we apply transformation~$\tau$ to the loop synthesised for~\cref{eq:QLA}.

The synthesis procedure is summarised in \cref{alg:aff} (see \cref{app:appendix-mainalgorithm}). %
By analysing the algorithm, one can argue that a negative output implies that~\cref{eq:quadr} has no solutions.
The problem of \emph{deciding} whether a loop exists for a given
invariant, as in \cref{prob:synthesis} and opposed to the synthesis of numerical values, is thus solved as follows.
\begin{corollary} \phantomsection \label{cor:loop-exists}
	Let $Q$ be a quadratic form, $L$ a linear form over variables~$\bm{x} = (x_1, \dots, x_d)$.
	\begin{enumerate}
		\item A non-trivial affine loop satisfying the quadratic equation $Q(\bm{x}) + L(\bm{x}) = c$ exists if and only if the equation has a rational solution different from $\bm{x} = \bm{0}$.
		\item A non-trivial linear loop satisfying the equation $Q(\bm{x}) = c$ exists if and only if the equation has a rational solution different from $\bm{x} = \bm{0}$.
	\end{enumerate}
\end{corollary}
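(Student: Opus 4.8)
Both biconditionals follow by pairing an easy ``forward'' implication with a ``backward'' implication that the preceding sections already supply. The plan is: \emph{(a)} observe that every reachable state of a loop whose invariant is the given equation is itself a rational solution of that equation, so a \emph{non-trivial} such loop, having an infinite orbit, immediately exhibits a rational solution different from $\bm 0$; and \emph{(b)} conversely, feed the equation to the synthesis procedure of \cref{loop-for-qf} (for item~2) or \cref{loop-for-qe} (for item~1) and argue that it returns a non-trivial loop precisely when a nonzero rational solution exists. One half of \emph{(b)} is \cref{loop-for-qf}/\cref{loop-for-qe} themselves; the other half is the contrapositive of the fact, already recorded after \cref{loop-for-qe}, that a negative (``\textsc{no loop}'') answer certifies unsolvability.

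For the forward direction of item~1, let $\langle M,\bm s,\bm t\rangle$ be a non-trivial affine loop with $Q(\bm x(n)) + L(\bm x(n)) = c$ for all $n \ge 0$. Then every point of the orbit $\mathcal O_{\bm x}$ solves $Q(\bm x) + L(\bm x) = c$; by non-triviality $\mathcal O_{\bm x}$ is infinite, hence contains some $\bm x(n) \ne \bm 0$, which is the required nonzero rational solution. Item~2 is the identical argument with $\bm t = \bm 0$ and $L \equiv 0$. Note this direction uses only that the orbit lies in the solution set and is infinite, not that the update is affine.

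For the backward direction, suppose the relevant equation has a rational solution $\bm s \ne \bm 0$. For item~2, run \autoref{alg:lin} on $Q(\bm x) = c$: after the diagonalisation of \cref{rat-diag}, the only exit returning ``\textsc{no loop}'' is the one taken when \textproc{solve} answers ``\textsc{no solutions}'', and by \cref{prop:Isotropic} (with the isotropy routine summarised in \cref{app:appendix-isotropy}) this occurs exactly when the homogenised form $a_1x_1^2 + \dots + a_dx_d^2 - cx_{d+1}^2$ is anisotropic, i.e.\ exactly when $Q(\bm x) = c$ has no nonzero rational solution. Since $\bm s$ witnesses solvability, the procedure instead takes one of the remaining branches and, by \cref{reduce-to-diag} and the proof of \cref{loop-for-qf}, outputs a non-trivial \emph{linear} loop. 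For item~1, run the procedure behind \cref{loop-for-qe} (\autoref{alg:aff}); by the analysis noted after \cref{loop-for-qe}, a ``\textsc{no loop}'' output forces $Q(\bm x) + L(\bm x) = c$ to have no rational solution, hence in particular no nonzero one. Contrapositively, on an input admitting a nonzero solution the procedure synthesises a non-trivial affine loop, whose correctness as an invariant-respecting loop is \cref{loop-for-qe-nondeg} in the non-degenerate case and \cref{loop-for-qe-deg} (together with the degenerate discussion preceding it when all $\lambda_i$ vanish) otherwise.

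The forward direction is routine, and the substance of the backward direction is already carried by \cref{loop-for-qf,loop-for-qe,loop-for-qe-nondeg,loop-for-qe-deg}. The one point I would spell out in full is the claim that \emph{every} ``\textsc{no loop}'' exit of Algorithms~\ref{alg:lin} and~\ref{alg:aff} is reached only when the (sub-)equation handled at that point is genuinely unsolvable over $\QQ$: this amounts to checking that each reduction used along the way---rational diagonalisation (\cref{rat-diag}), the affine completion-of-the-square map of~\eqref{legendre-transf}, splitting off $\ker A_Q$ via $\tau$, and specialising the $\lambda_i$-variables---is a \emph{solution-preserving bijection} of the ambient rational space, so that unsolvability of a reduced instance pulls back to unsolvability of the original. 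I also expect a short separate check of the low-dimensional boundary cases (e.g.\ $d = 1$, or rank $r = 1$ after diagonalisation), where the explicit constructions of \cref{loop-for-qe-deg} are invoked; this is where the $d \ge 2$ convention under which the synthesis procedures operate is used.
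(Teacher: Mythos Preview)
Your proposal is correct and mirrors the paper's own reasoning: the paper does not give a standalone proof of this corollary but simply notes, immediately before stating it, that ``by analysing the algorithm, one can argue that a negative output implies that~\eqref{eq:quadr} has no solutions,'' and leaves the forward (orbit-gives-a-solution) direction implicit. Your write-up is therefore a faithful expansion of exactly the argument the paper intends, with the same ingredients (\cref{loop-for-qf}, \cref{loop-for-qe}, \cref{loop-for-qe-nondeg}, \cref{loop-for-qe-deg}, \cref{prop:Isotropic}, and the solution-preserving nature of the reductions).
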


\begin{example}
	Let $-11x^2+y^2-3z^2+2xy-12xz + x +z = -1$ be a quadratic invariant in~$3$ variables.
	The quadratic form~$Q(x,y,z) = -11x^2+y^2-3z^2+2xy-12xz$ is degenerate with rank $r = 2$ and so 
	we can compute $\tau = \left( \begin{smallmatrix}
		-1 & 0 & 0\\
		1 & 3 & 0\\
		2 & 0 & 3
	\end{smallmatrix}\right)$
such that $\tau^\top  A_Q \tau = \diag{0,9,-27}$ is the matrix of an equivalent form.
We have $Q(\tau \bm{x}) = \tilde{Q}(y, z) = 9y^2-27z^2$.
For the linear part, $L(x, y, z) = x+z$, the change of coordinates results in
$L(\tau \bm{x}) = \tilde{L}(y,z) + x = 3z + x$.
Continue with the equation of the form~\eqref{eq:QLA}:
\(9y^2-27z^2+3z = - 1 - x\).
Here, $\lambda_1 = 1$, and so we set $(y, z)$ to $(\alpha_1, \alpha_2) = (\frac{1}{3}, 0)$ and find a solution for $x$: $\beta_1 = -2$.
Next, find an affine transformation~$f$ associated with $9y^2-27z^2+3z=1$. We have $\delta =243$, $\bm{h} = (0, 27)^\top $ and $\tilde{c} = 216513$.
The solutions of $ 9y^2-27z^2+3z = 1$ are exactly the solutions of $9y^2-27z^2 = 216513$ under the action of~$f$.

Using the \textproc{linLoop} procedure, we find a linear loop $\langle M, \bm{s} \rangle$ for the invariant $9y^2-27z^2 = 216513$ with
\(M = \left(\begin{smallmatrix} 2 & 3\\
	1 & 2 \end{smallmatrix}\right)\) and \(\bm{s} = (-162, 27)^\top\). 
Therefore, an affine loop
\[\mathcal{A}:= \langle M, \frac{1}{2\delta}\left(\bm{s}- \bm{h}\right), \frac{1}{2\delta}\left(M - I_2\right)\bm{h}\rangle;\] 	
that is, an affine loop with augmented matrix~$M'$ and initial vector~$\bm{s}'$ given by
\[M' = \left(\begin{array}{c|c} 1 & 0_{1,r} \\ \hline \frac{1}{2\delta}\left(M - I_2\right)\bm{h}  & M	\end{array}\right) = \begin{pmatrix}
	1 & 0 & 0\\
	-1/6 & 2 & 3\\
	-1/18 & 1 & 2
\end{pmatrix}
\quad \text{and} \quad
\bm{s}' = \frac{1}{2\delta}\left(\bm{s}- \bm{h}\right) = \begin{pmatrix}
	\frac{1}{3} \\ 0
\end{pmatrix}\!\!,
\]
satisfies the invariant $9y^2-27z^2+3z = 1$.
Consequently, a \emph{linear} loop with update matrix
\[M_\beta:= \begin{pmatrix}
	1 & 0 & 0\\
	1/12 & 2 & 3\\
	1/36 & 1 & 2
\end{pmatrix}\] and initial vector $( -2, 1/3, 0)^\top $ satisfies the invariant \(9y^2-27z^2+3z = - 1 - x\).
We conclude by applying transformation~$\tau$:
a linear loop with matrix
\[\tau  M_\beta  \tau^{-1} = \begin{pmatrix}
	1 & 0 & 0\\
	27/4 & 2 & 3\\
	35/12 & 1 & 2\\
\end{pmatrix} \quad
\text{and initial vector} \quad 
\tau \begin{pmatrix} -2 \\ 1/3 \\ 0
\end{pmatrix}  = \begin{pmatrix} 2 \\ -1 \\ -4
\end{pmatrix} \]
satisfies the original invariant 
$-11x^2+y^2-3z^2+2xy-12xz + x +z = -1$.
\end{example}

\section{Conclusion}
\label{sec:conclusion}
\subsection{Related Work}
\label{sec:related}
\subparagraph{Loop Synthesis.}
Work by Humenberger et~al.\ on loop synthesis employs an approach
based on algebraic reasoning about
linear recurrences and translating loop synthesis into an SMT (Satisfiability Modulo Theory)
solving task in non-linear arithmetic~\cite{humenberger2022journal}.
Their approach is relatively complete %
in the sense that every loop with algebraic values is captured as one of the solutions to the system of constraints.
At the same time, no method is known to decide whether such a system has a \emph{rational} solution.
In contrast, our approach gives a characterisation of quadratic invariants that have linear loops with rational values.

Another SMT-based algorithm for template-based synthesis of general polynomial programs is given in work by Goharshady et~al.~\cite{goharshady2023synthesis}. However, loops generated for an invariant $P=0$ using the latter approach necessarily have $P=0$ as an inductive invariant and %
are not guaranteed to have infinite orbits.
Recent work by Kenison et al.\ addresses the loop synthesis problem for multiple polynomial invariants, 
where each of the polynomials is a binomial of a certain type~\cite{kenison2023pdb}.
In our work, we restrict not the number of monomials in an invariant, but its degree,
and thus achieve a complete solution for a single quadratic invariant.

\subparagraph{Solving Polynomial Equations.}
As noted in
\cref{rem:SyntSolv}, 
one of the fundamental challenges towards loop synthesis arises from the study of
integer and rational solutions to polynomial equations. 
A \emph{Diophantine equation} \(F(x_1, x_2,\ldots, x_d)=0\) is a polynomial equation with rational coefficients in at least two variables.
A general decision procedure for the existence of rational solutions to a Diophantine equation (\cref{prob:dioph}) is not known. 
Over the ring of integers, this is Hilbert's 10th Problem, proven undecidable by Matiyasevich in 1970~\cite{matiyasevich1970h10}. 
Furthermore, there does not exist an algorithm that for an arbitrary
Diophantine equation, decides whether it has infinitely many integer
solutions~\cite{davis1972numsol}.

In contrast to the algorithmic unsolvability of Hilbert's 10th Problem 
and the open status of~\cref{prob:dioph}, algorithms exist that allow finding rational solutions for special classes of equations.
For instance, there exist procedures~\cite{grunewald1981solveZ,siegel1972quad,masser1998bound} 
completely solving the specialisation of the problem to quadratic
equations.
Masser introduced an approach based on the effective search bound for rational solutions~\cite{masser1998bound}.
A further improvement of this approach for $d \geq 5$ is provided in \cite{browning2008bound}.
An alternative procedure to decide whether an arbitrary quadratic equation has a rational solution is described in~\cite{grunewald1981solveZ} (see Corollary, pg.~2 therein). 
Determining the existence of integer solutions to a \emph{system}
of quadratic equations is, however, undecidable~\cite{britton1979systems}. 

\subsection{Discussion}
\label{sec:discussion}
We conclude by sketching some observations and pointing out the directions for future work.

\subparagraph{Multiple loops.}
The approach of \cref{alg:lin} can be adapted to generate multiple linear loops satisfying a given invariant.
Different solutions of the quadratic equation can be found in line~5 (\cref{alg:lin}) and subsequently used as an initial vector.
Moreover, in line~11, it is possible to pick two variables~$(x_1, x_2)$ in different ways, thus obtaining different matrices~$M$ in line~12.
Each of the matrices synthesised so is an element of the orthogonal group $\Gamma(Q')$%
\footnote{The orthogonal group~$\Gamma(Q)$ of a quadratic form~$Q$ is the group of all linear automorphisms~$M \in \GL_d(\KK)$ such that $Q(\bm{x}) = Q(M\bm{x})$ for all~$\bm{x} \in \KK^d$.} 
of the quadratic form~$Q'$.
Therefore, all possible products of these matrices also preserve the value of~$Q'$
and can be used as updates.
Other than the default matrix selected by the algorithm, 
some of these matrices alter more than two variables non-trivially,
intuitively making the synthesised loop more specific to the polynomial invariant.

\subparagraph{Number of loop variables.}
Let $P(x_1, \dots, x_d) = 0$ be a quadratic invariant in~$d$ variables.
Note that \cref{loop-for-qe} can be interpreted in terms of linear loops with variables~$x_0, x_1, \dots, x_d$.
Specifically, we can redefine the loop synthesis problem (\autoref{prob:synthesis}) by searching for linear loops with $s=d+1$ variables. %
To this end, update the procedure in~\cref{sec:qe} as follows:
if the output of the original algorithm is %
an affine loop~$\langle M, \bm{s}, \bm{t}\rangle$, 
then output the linear loop  \[\left\langle\left(\begin{array}{c|c} 1 & 0_{1,d} \\ \hline \bm{t}  & M	\end{array}\right)\!\!, \begin{pmatrix}1 \\ \bm{s} \end{pmatrix}\right\rangle.\]
Due to \cref{cor:loop-exists}, the updated procedure solves the problem of loop synthesis with one additional variable. What follows is a reinterpretation of~\cref{loop-for-qe}:
\begin{corollary}\label{lin-loop-for-qe}
	There exists an effective procedure for the following problem: 
	given a quadratic equation
	\[Q(x_1, \dots, x_d) + L(x_1, \dots, x_d) = c,\]
	decide whether there exists a non-trivial linear loop in~$d+1$ variables~$\{x_0, x_1, \dots, x_d\}$ that satisfies it.
	Furthermore, the procedure synthesises a loop, if one exists.
\end{corollary}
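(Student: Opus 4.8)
\textbf{Proof plan for \cref{lin-loop-for-qe}.}
The statement is essentially a corollary of \cref{loop-for-qe} together with the augmented-matrix construction of \cref{rem:LinAffLoop}, so the plan is to assemble these ingredients rather than to prove anything substantially new. First I would invoke \cref{loop-for-qe}: given the quadratic equation $Q(x_1,\dots,x_d) + L(x_1,\dots,x_d) = c$, the effective procedure of that theorem either reports that no non-trivial affine loop satisfying this invariant exists, or it outputs such a loop $\langle M, \bm{s}, \bm{t}\rangle$ with $M \in \QQ^{d\times d}$, $\bm{s}, \bm{t} \in \QQ^d$. In the first case, \cref{cor:loop-exists} tells us that the equation has no rational solution other than $\bm{0}$; since the orbit of any linear loop (in any number of variables, with the last $d$ coordinates realising the invariant variables) must contain a point on the variety $Q+L=c$, no non-trivial linear loop in $d+1$ variables can satisfy the invariant either, so the procedure correctly answers \textsc{no}. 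Here I should be slightly careful: a linear loop in variables $x_0,\dots,x_d$ satisfying $Q(x_1,\dots,x_d)+L(x_1,\dots,x_d)=c$ must have $(x_1(0),\dots,x_d(0))$ a rational solution of the invariant, and $\bm{0}$ is a solution only if $c=0$; I would dispatch the $c=0$ subcase separately (a scaling loop $\lambda I_{d+1}$ on a nonzero solution works, and one always exists iff there is a nonzero rational solution, matching \cref{cor:loop-exists}).

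In the second case, where \cref{loop-for-qe} returns an affine loop $\langle M, \bm{s}, \bm{t}\rangle$, the construction is exactly the one displayed in the statement: form the augmented matrix
\[
M' := \left(\begin{array}{c|c} 1 & 0_{1,d} \\ \hline \bm{t} & M \end{array}\right) \in \QQ^{(d+1)\times(d+1)}, \qquad \bm{s}' := \begin{pmatrix} 1 \\ \bm{s} \end{pmatrix} \in \QQ^{d+1},
\]
and output the linear loop $\langle M', \bm{s}' \rangle$ in variables $x_0, x_1, \dots, x_d$. By \cref{rem:LinAffLoop}, this linear loop simulates the affine loop in its last $d$ coordinates: a straightforward induction on $n$ shows $(M')^n \bm{s}' = (1, \bm{x}(n))^\top$ where $\bm{x}(n)$ is the $n$th state vector of the affine loop. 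Consequently $Q(\bm{x}(n)) + L(\bm{x}(n)) = c$ holds for all $n \geq 0$ because it holds for the affine loop by the correctness of \cref{loop-for-qe}, and this is precisely the invariant of the corollary (which does not constrain the auxiliary variable $x_0$, constantly $1$).

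It remains to check non-triviality: the orbit $\{(M')^n\bm{s}' : n \geq 0\} \subseteq \QQ^{d+1}$ is infinite if and only if $\{\bm{x}(n) : n \geq 0\}$ is, since the first coordinate is constant and the map $\bm{y} \mapsto (1,\bm{y})$ is injective. As the affine loop returned by \cref{loop-for-qe} is non-trivial by construction, so is $\langle M', \bm{s}'\rangle$. Effectiveness is inherited verbatim from the effectiveness of \cref{loop-for-qe}, the augmentation being a trivial matrix manipulation. I do not anticipate a genuine obstacle here; the only point requiring mild care is the ``only if'' direction of the decision claim in the $c=0$ corner case, i.e.\ ensuring that the procedure's negative answer for the $(d+1)$-variable problem is justified by the same solvability criterion (\cref{cor:loop-exists}) that governs the affine problem, which is immediate once one observes that the extra variable $x_0$ is inert with respect to the invariant.
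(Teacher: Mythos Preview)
Your approach is the same as the paper's: invoke \cref{loop-for-qe}, augment the returned affine loop via \cref{rem:LinAffLoop}, and appeal to \cref{cor:loop-exists} for the negative branch. You are in fact more careful than the paper, which simply asserts that ``due to \cref{cor:loop-exists}, the updated procedure solves the problem''; you rightly flag that something extra is needed when the affine procedure returns \textsc{no loop}.

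However, your handling of that edge case is muddled. The claim that ``a scaling loop $\lambda I_{d+1}$ on a nonzero solution works'' is false in general when $c=0$ and $L\not\equiv 0$: if $(s_1,\dots,s_d)$ satisfies $Q(\bm{s})+L(\bm{s})=0$ with $L(\bm{s})\neq 0$, then $Q(\lambda\bm{s})+L(\lambda\bm{s})=\lambda^2 Q(\bm{s})+\lambda L(\bm{s})\neq 0$ for $\lambda\notin\{0,1\}$. More importantly, your ``iff there is a nonzero rational solution'' is the wrong criterion in $d+1$ variables. When $c=0$, a non-trivial linear loop in $d+1$ variables \emph{always} exists, even if $\bm{0}$ is the only $\QQ^d$-solution of $Q+L=0$: take initial vector $(1,0,\dots,0)^\top$ and update $\diag{2,1,\dots,1}$; the last $d$ coordinates stay at $\bm{0}$ (a solution since $c=0$) while the orbit is infinite. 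So the correct patch is simply: if $c=0$, output that loop unconditionally; if $c\neq 0$, proceed exactly as you describe, where a negative answer from \cref{loop-for-qe} is justified because any point on the variety already gives a nonzero $\QQ^d$-solution and \cref{cor:loop-exists} applies directly. With this adjustment your argument is complete.
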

Increasing the number of variables in the loop template leads to the following question,
also raised in~\cite{humenberger2022journal}:%
\begin{que}
	Let $P$ be an arbitrary polynomial in~$d$ variables. Does there exist an upper bound~$N$ such that if a non-trivial linear loop satisfying~$P = 0$ exists, then there exists a non-trivial linear loop with at most~$N$ variables satisfying the same invariant?
\end{que}

\cref{lin-loop-for-qe} (together with~\cref{cor:loop-exists}) shows that, \emph{for quadratic polynomials, $N$ is at most $d+1$.}
Moreover, we show in~\cref{sec:qf} that in the class of polynomial equations $Q(\bm{x}) - c$, where $Q$ is a quadratic form, the bound~$N = d$ is tight. 
A full characterisation of quadratic equations for which linear loops with~$d$ variables exist would also be of interest.
\subparagraph{Sufficient conditions.} 
The results of~\cref{sec:qf,sec:qe} witness another class of polynomial invariants for which non-trivial linear (or affine) loops always exist.
Similar  to the setting of equations with pure difference binomials in~\cite{kenison2023pdb}, 
we can claim this for invariants~$Q(x_1, \dots, x_d)= c$ with isotropic quadratic forms~$Q$.
In particular, for every equation of the form $a_1x_1^2 + \cdots + a_dx_d^2 + c=0$ with $d \geq 4$ and $a_1, \dots, a_d, c$ not all possessing the same sign, there exists a non-trivial linear loop with~$d$ variables.
This fact is due to Meyer's Theorem on isotropy of indefinite forms~\cite{meyer1884mathematische} and \cref{cor:loop-exists}(2).
\subparagraph{Beyond quadratic.}
One future work direction concerns loop synthesis from invariants 
that are polynomial equalities of higher degrees, and, in particular, algebraic forms.
However, we are limited by the hardness of~\cref{prob:dioph}, as before.
For Diophantine equations defined with homogeneous polynomials of degree~3,
the loop synthesis is related to the study of rational points on elliptic curves, a central topic in computational number theory~\cite{silverman2015elliptic,cohen2007number}.

\newpage
\bibliography{idealbib}
\newpage
\appendix
\section{Finding isotropic vectors} \label{app:appendix-isotropy}

We consider two tasks related to quadratic forms over the rationals.
First, the problem of determining whether a quadratic form is isotropic.
Given an isotropic quadratic form, the second problem is to determine an isotropic vector for the form.
\emph{There are effective procedures for both tasks}, which we respectively call \textproc{isIsotropic} and \textproc{findIsotropic}.
We call on both of these procedures in \autoref{alg:isotropic}.

	Before we describe \textproc{solve}, we discuss certain assumptions.
Given a quadratic form $a_1x_1^2 + \dots +a_dx_d^2 -cx_{d+1}^2$, 
we can assume $a_1, \dots, a_d, c \in \ZZ$ by homogeneity.
Without loss of generality, we also assume that the coefficients $a_1, \dots, a_d, c$ are non-zero. 

A necessary condition for $a_1x_1^2 + \dots +a_dx_d^2 = c$ (\autoref{eq:pure-quadr}) to admit a non-trivial solution concerns the signs of coefficients. 
If each of the coefficients $a_1, \dots, a_d$ is positive (resp.\ negative), $c$ has to be positive (resp.\ negative); otherwise, no solutions exist.
In summary, once called, procedure \textproc{solve} first checks if the quadratic form is indefinite.
It returns ``\textsc{no solutions}'' otherwise.
Hence, we can assume that \textproc{isIsotropic} is initiated only for indefinite forms.
The subroutine \textproc{eliminateZeros} follows the proof of~\cref{rep-not-zero} 
and produces an isotropic vector where every variable is non-zero from an arbitrary isotropic vector.
\begin{algorithm} 
		\caption{returns a non-zero solution of~\cref{eq:pure-quadr} or ``\textsc{no solutions}'' as appropriate}		\label{alg:isotropic}
	\begin{algorithmic}[1]
	\Require An equation $a_1x_1^2 + \dots +a_dx_d^2 = c$, where $a_1, \dots, a_d, c \in \ZZ$.
	
	\Comment{The corresponding quadratic form $a_1x_1^2 + \dots +a_dx_d^2 -cx_{d+1}^2$ is non-degenerate.}
	\State $(\alpha_1, \alpha_2, \ldots, \alpha_{d+1}) :=$ \textsc{undefined}
	\State $(\beta_1, \beta_2, \ldots, \beta_{d+1}) :=$ \textsc{undefined}
	\Function{solve}{$a_1,\ldots, a_d, c$}
		\If{$a_1, \dots, a_d, -c$ are all of the same sign}
			\Return ``\textsc{no solutions}''
		\Else
			\If{$\textproc{isIsotropic}(a_1,\ldots, a_d, -c)= \text{``\textsc{yes}''}$}
			\Comment{isotropy test}
			\State $(\alpha_1, \alpha_2, \ldots, \alpha_{d+1}) :=$ \textproc{findIsotropic}$(a_1,\dots,a_d,-c)$. \Comment{find an iso vector}
			\State $(\beta_1, \beta_2, \ldots, \beta_{d+1}):= \textproc{eliminateZeros}(\alpha_1, \alpha_2, \ldots, \alpha_{d+1})$
			\State \Return $(\beta_1/\beta_{d+1}, \beta_2/\beta_{d+1}, \ldots, \beta_d/\beta_{d+1})$
			\Else
			\State \Return ``\textsc{no solutions}''
			\EndIf
	\EndIf
	\EndFunction
	\end{algorithmic}
\end{algorithm}

The theoretical machinery underpinning both \textproc{isIsotropic} and \textproc{findIsotropic} is beyond the scope of our work.
The purpose of this appendix is to bring together results from computational number theory and signpost relevant sources in the literature.
To this end, we categorise non-degenerate quadratic forms \(Q(x_1, \ldots, x_{d+1})\) (as in \cref{eq:iso}) according to their dimension \(d+1\).

\subparagraph*{Cases \(d\in\{0,1\}\).}
The case $d=0$ is trivial, and for $d = 1$,
\(a_1x_1^2 - cx_2^2 = 0\)
has a non-trivial solution $(\alpha_1, \alpha_2)$ if and only if $\frac{c}{a_1}$ is a square in~$\QQ$. This also yields an algorithm to compute a solution.
Finding isotropic vectors (and, to a certain extent, just testing isotropy) becomes more intricate with $d \geq 2$.

\subparagraph*{Ternary quadratic forms (\(d=2\)).} 
In \autoref{bin-solutions} we demonstrated how one can generate infinitely many non-trivial solutions from a single isotropic vector.
We now turn our attention to the problem of deciding whether a ternary quadratic form (as in \cref{eq:iso}) is isotropic and, if so, how to find an isotropic vector.

Legendre's theorem handles the problem of deciding whether a ternary quadratic form.
Recall that an integer \(q\) is a \emph{quadratic residue modulo \(n\)} if there exists an integer \(x\) for which \(x^2 \equiv q \pmod{n}\).
\begin{theorem}[Legendre] \label{thm:legendre}
	Let \(a,b,c\in\Z\) be rational integers such that \(abc\) is both non-zero and square-free.
	Then the Diophantine equation \begin{equation}\label{eq:ternary}ax^2 + by^2 + cz^2 = 0\end{equation} has non-trivial solutions if and only if 
	\begin{enumerate}
		\item \(a\), \(b\), and \(c\) do not all have the same sign, and
		\item \(-ab\), \(-bc\), and \(-ca\) are quadratic residues modulo \(c\), \(a\), and \(b\), respectively.
	\end{enumerate}
\end{theorem}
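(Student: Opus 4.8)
Since $abc$ is square-free, $a,b,c$ are pairwise coprime and individually square-free; I use this throughout. Both hypotheses are symmetric under simultaneously permuting $(a,x),(b,y),(c,z)$, so I also assume $|a|\le|b|\le|c|$.

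\emph{Necessity} is elementary. A non-zero real point is impossible when $a,b,c$ share a sign, so condition~1 holds. For condition~2, fix a solution $(x_0,y_0,z_0)\in\Z^3$ with $\gcd(x_0,y_0,z_0)=1$ and reduce $ax_0^2+by_0^2+cz_0^2=0$ modulo $a$. If a prime $p\mid a$ divided $z_0$, then $p\mid by_0^2$ and so $p\mid y_0$ (as $p\nmid b$); hence $p^2\mid by_0^2+cz_0^2=-ax_0^2$, and since $p^2\nmid a$ this forces $p\mid x_0$, contradicting primitivity. So $z_0$ is a unit modulo $a$, and multiplying $by_0^2\equiv-cz_0^2\pmod a$ by $bz_0^{-2}$ gives $-bc\equiv(by_0z_0^{-1})^2\pmod a$. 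The other two congruences are obtained symmetrically.

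\emph{Sufficiency} is the substantive part, and I would prove it by the geometry of numbers. Condition~2 and the Chinese Remainder Theorem yield $\lambda_a,\lambda_b,\lambda_c\in\Z$ with $b\lambda_a^2+c\equiv0\pmod{|a|}$, $a\lambda_b^2+c\equiv0\pmod{|b|}$, and $b\lambda_c^2+a\equiv0\pmod{|c|}$. Consider the sublattice
\[
\Lambda:=\bigl\{(x,y,z)\in\Z^3 : y\equiv\lambda_a z \pmod{|a|},\ x\equiv\lambda_b z \pmod{|b|},\ y\equiv\lambda_c x \pmod{|c|}\bigr\}.
\]
On $\Lambda$ the form $Q=aX^2+bY^2+cZ^2$ is divisible by each of $|a|,|b|,|c|$ — for instance $y\equiv\lambda_a z$ gives $Q\equiv(b\lambda_a^2+c)z^2\equiv0\pmod{|a|}$ — hence by $|abc|$, whereas $[\Z^3:\Lambda]=|abc|$ because the moduli are pairwise coprime. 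Now apply Minkowski's convex-body theorem to the ellipsoid $|a|x^2+|b|y^2+|c|z^2\le k$, which has volume $\tfrac{4\pi}{3}k^{3/2}/\sqrt{|abc|}$: choosing an integer $k<2|abc|$ making this volume exceed $8[\Z^3:\Lambda]$ — possible after treating the finitely many small cases, notably $|abc|\le2$, by inspection, and noting $|abc|=1$ forces $|a|=|b|=|c|=1$, where condition~1 yields a solution such as $(1,0,1)$ — produces a non-zero $v\in\Lambda$ at which $|Q(v)|<2|abc|$. Since $|abc|\mid Q(v)$, either $Q(v)=0$, and we are done, or $Q(v)=\pm|abc|$, in which case a classical composition identity for ternary forms — here the mixed signs granted by condition~1 are used — turns $v$ into a non-trivial zero of $Q$.

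The main obstacle is exactly this last step and its bookkeeping: making the Minkowski estimate tight enough that no multiple of $|abc|$ other than $0,\pm|abc|$ lies in range, and then eliminating the $\pm|abc|$ case. A second, classical subtlety is that \emph{the statement carries no congruence condition at the prime $2$}; this is invisible in the argument above because $\Lambda$ is built only from odd primes dividing $abc$, and it reflects Hilbert reciprocity. Indeed, a cleaner but heavier route is Hasse--Minkowski: $Q$ is isotropic over $\Q$ iff it is isotropic over $\R$ (condition~1) and over every $\Q_p$; local isotropy is automatic for $p\nmid2abc$, unwinds to condition~2 for odd $p\mid abc$, and at $p=2$ comes for free from the product formula $\prod_v(\,\cdot\,,\,\cdot\,)_v=1$ for Hilbert symbols. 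I would present the elementary geometry-of-numbers proof and cite Hasse--Minkowski as context, matching the paper's references.
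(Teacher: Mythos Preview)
The paper does not prove Legendre's theorem at all; it is stated in the appendix purely as background and immediately deferred to the literature, specifically to the constructive proof in Ireland and Rosen, \emph{A Classical Introduction to Modern Number Theory}, Chapter~17.3. There is therefore no ``paper's own proof'' to compare against --- your sketch stands on its own as an independent attempt at a classical result.

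On the substance: your necessity argument is clean and correct. For sufficiency you take the geometry-of-numbers route (lattice plus Minkowski), which is a legitimate classical alternative to the descent argument in Ireland--Rosen. The lattice $\Lambda$ and the divisibility $|abc|\mid Q(v)$ on it are set up correctly, and the Minkowski volume count does pin $Q(v)$ to $\{0,\pm|abc|\}$. But you yourself flag the real gap: disposing of the case $Q(v)=\pm|abc|$. The phrase ``a classical composition identity for ternary forms'' is a placeholder, not an argument; this step genuinely requires work (in the standard treatments it is handled either by an explicit algebraic identity such as $a(x_0z_0+by_0)^2+b(y_0z_0-ax_0)^2+c(z_0^2+ab)^2=(z_0^2+ab)\cdot(ax_0^2+by_0^2+cz_0^2+abc)$ together with a check that $z_0^2+ab\ne 0$, or by a sharper box-and-pigeonhole count that avoids the boundary case). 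As written, your proof is incomplete at exactly the point you identify. If you intend to include a proof rather than a citation, either supply that identity and verify the non-degeneracy side condition, or switch to the descent proof the paper cites.
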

Recall that every equation of the form \eqref{eq:ternary} admits a normalised square-free form with integer coefficients.
Thus we can always assume the coefficient conditions set out in \autoref{thm:legendre}.

On the topic of solving ternary quadratics, a constructive proof of Legendre's theorem is known (see, e.g., \cite[Chapter 17.3]{ireland1990classical}). %
That is, there exist algorithms that generate a single solution to~\cref{eq:ternary}, or prove that none exist. 
We provide references to two algorithmic approaches in the literature.

Cremona and Rusin  gave a factorization-free reduction method of solving ternary quadratics~\cite{cremona2003conics}. 
Their method efficiently generates an integer solution $(x_0,y_0,z_0)$ to~\cref{eq:ternary}.

Work by Simon is built on Gauss' approach to solving ternary quadratic forms \cite{simon2005solving}.
Briefly, Gauss' approach consists of two steps: first, compute square roots modulo the form's coefficients and build another quadratic form with determinant \(-1\); and second, reduce and solve this new quadratic form.
Simon's work introduced a generalisation of the LLL–algorithm to indefinite quadratic forms, which speeds up the second step in Gauss' approach.

\subparagraph*{Quaternary quadratic forms (\(d=3\)).}
Recall  the Hasse-Minkowski  theorem, a classical global-local principle in number theory, as follows:
\begin{theorem}[Hasse--Minkowski]
	A homogeneous quadratic equation with rational coefficients represents zero in the field of rationals if and only if it represents zero in \(\R\) and in the fields \(\Q_p\), the \(p\)-adic fields, for all primes \(p\).
\end{theorem}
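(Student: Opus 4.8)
The plan is to prove the Hasse--Minkowski theorem by induction on the number $n$ of variables of the quadratic form, after first reducing to the diagonal case. Write the form as $Q = a_1 x_1^2 + \cdots + a_n x_n^2$ with $a_i \in \mathbb{Q}^\times$. One direction is immediate: a nontrivial rational zero is a nontrivial zero over every completion $\mathbb{Q}_v$. So the content is the converse --- if $Q$ is isotropic over $\mathbb{R}$ and over every $\mathbb{Q}_p$, then $Q$ is isotropic over $\mathbb{Q}$ --- and this is what I would establish by induction on $n$.

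For $n = 1$ there is nothing to prove. For $n = 2$, after scaling the task is to show that $a \in \mathbb{Q}^\times$ that is a square in $\mathbb{R}$ and in every $\mathbb{Q}_p$ is already a square in $\mathbb{Q}$; this follows from unique factorisation in $\mathbb{Z}$, since every $p$-adic valuation of $a$ is then even and $a > 0$. The cases $n = 3$ and $n = 4$ are the heart of the argument. For $n = 3$ one may quote the constructive Legendre theorem (\autoref{thm:legendre}) and verify that its congruence conditions are exactly equivalent to solvability at all finite places, or argue directly via the Hilbert symbols $(a,b)_v$: invoke the Hilbert reciprocity law $\prod_v (a,b)_v = 1$ together with Dirichlet's theorem on primes in arithmetic progressions to manufacture an auxiliary prime with prescribed quadratic-residue behaviour, forcing a global zero. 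The $n = 4$ case reduces to the ternary case by writing the quaternary form as a difference of two binary forms and showing --- again using Dirichlet's theorem and Chinese-Remainder approximation at the bad places --- that the two binary forms represent a common rational value locally everywhere, hence globally.

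For the induction step $n \geq 5$ I would use the fact that any non-degenerate quadratic form in at least five variables over a $p$-adic field is automatically isotropic, so the only genuine local constraints come from $\mathbb{R}$ and the finitely many primes dividing $2 a_1 \cdots a_n$. Splitting $Q = (a_1 x_1^2 + a_2 x_2^2) - (-a_3 x_3^2 - \cdots - a_n x_n^2)$ and using weak approximation at those bad places, one finds a rational value $b$ represented by the binary part at $\mathbb{R}$ and at every bad prime; then the $(n-1)$-variable form $-a_3 x_3^2 - \cdots - a_n x_n^2 - b x_0^2$ is isotropic at every completion (at the good primes for free, at the bad ones by construction), so the inductive hypothesis produces a rational zero, which back-substitutes to a rational zero of $Q$.

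The hard part will be the base cases $n = 3$ and $n = 4$: this is exactly where genuinely global input enters, through (i) the Hilbert reciprocity law $\prod_v (a,b)_v = 1$, which is essentially equivalent to quadratic reciprocity, and (ii) Dirichlet's theorem on primes in arithmetic progressions, needed to realise prescribed local conditions. Everything else --- diagonalisation, the Hilbert-symbol bookkeeping, and the descent for $n \geq 5$ --- is comparatively routine once those two deep ingredients are in hand. Since only the statement of the theorem is used downstream in this paper, for our purposes it suffices to cite the classical treatments (for instance Serre's \emph{A Course in Arithmetic} or Cassels' \emph{Rational Quadratic Forms}) rather than reproduce this argument in full.
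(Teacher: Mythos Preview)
The paper does not actually prove the Hasse--Minkowski theorem: it merely states it as a classical result, gives a one-paragraph gloss on the standard quaternary case (splitting into two binary forms and invoking Dirichlet's theorem on primes in arithmetic progressions), and refers the reader to the literature. Your proposal is a correct outline of the standard Serre-style inductive proof and in fact goes further than the paper does; your concluding remark---that for the purposes of this paper it suffices to cite Serre or Cassels---is exactly the stance the paper itself takes.
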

The standard proof of the theorem reduces the problem to considering quadratic forms as follows: \(a_1x_1^2 + a_2 x^2 + a_3 x^2 + a_4 x_4^2\) where each coefficient is a square-free integer.
Then, by way of Dirichlet's theorem on the infinitude of primes in arithmetic progressions,
we search for a prime that the two binary quadratic forms \(a_1x_1^2 + a_2 x^2\) and \(-a_3 x^2 - a_4 x_4^2\) both represent.
Thus for isotropic quarternary quadratic forms, the classical proof of the Hasse--Minkowski theorem explicitly constructs a representation of zero  (cf.~\cite{simon2005quadratic}).

As an algorithm, the aforementioned construction of a representation of zero is inefficient due to its reliance on Dirichlet's theorem.
Simon produced an efficient algorithmic construction of a representation of zero~\cite{simon2005quadratic}. 
Simon's algorithm employs a technique first seen in work by Cassels on a proof of the Hasse–Minkowski Theorem for quaternary quadratic forms that circumvents Dirichlet's theorem.

\subparagraph*{Quadratic forms with 5 and more variables (\(d\ge 4\)).}

Meyer's theorem establishes that every indefinite quadratic form in five or more variables has a non-trivial representation of zero \cite{meyer1884mathematische}.

\begin{theorem}[Meyer]
Every indefinite quadratic form in five or more variables over the field of rationals is isotropic.
\end{theorem}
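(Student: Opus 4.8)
The plan is to derive Meyer's theorem from the Hasse--Minkowski theorem stated above, trading the single global statement for a family of local ones. Let $Q$ be a (non-degenerate) indefinite quadratic form in $n \ge 5$ variables over $\QQ$; degenerate forms are isotropic trivially, since any vector in the radical is isotropic. By Hasse--Minkowski it suffices to exhibit a non-trivial zero of $Q$ over $\RR$ and over each $p$-adic field $\QQ_p$. The archimedean place is immediate: after diagonalising $Q$ over $\RR$, indefiniteness supplies coefficients $a_i > 0$ and $a_j < 0$, and setting $x_i = \sqrt{-a_j}$, $x_j = \sqrt{a_i}$ and all other coordinates to $0$ yields an isotropic vector.

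The heart of the argument is the local claim that \emph{every} quadratic form over $\QQ_p$ in at least five variables is isotropic, so that the indefiniteness hypothesis plays no role $p$-adically. Equivalently, I would prove that an anisotropic form over $\QQ_p$ has at most four variables. Diagonalise $Q$ over $\QQ_p$, and rescale each variable by a suitable power of $p$ so that every coefficient becomes either a $p$-adic unit or $p$ times a unit; collecting terms yields an orthogonal decomposition $Q \cong Q_0 \perp p\,Q_1$ with $Q_0,Q_1$ having unit coefficients. If $Q$ is anisotropic then so are $Q_0$ and $Q_1$ (an isotropic vector for either, suitably scaled by a power of $p$, gives one for $Q$). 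For odd $p$, a form $\sum a_i x_i^2$ in at least three variables with $a_i \in \ZZ_p^\times$ is isotropic: reducing modulo $p$ gives a non-degenerate form over $\FF_p$ in $\ge 3$ variables, which has a non-trivial zero by the Chevalley--Warning theorem, and this zero lifts to $\QQ_p$ by Hensel's lemma since the gradient of $\sum \bar a_i x_i^2$ vanishes only at the origin. Hence $\dim Q_0 \le 2$ and $\dim Q_1 \le 2$, forcing $\dim Q \le 4$, a contradiction once $n \ge 5$.

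The main obstacle is precisely this local classification---the fact that the $u$-invariant of $\QQ_p$ equals $4$---and, within it, the prime $p = 2$. For $p = 2$, reduction modulo $2$ is too crude (completing the square is obstructed by the $2$ in the denominators of $A_Q$); one must instead analyse unimodular forms modulo $8$, or else invoke the explicit classification of quadratic forms over $\QQ_2$ by rank, discriminant, and Hasse invariant, together with the fact that the anisotropic $\QQ_2$-form of maximal rank is $4$-dimensional. Granting the local claim at all places (including $p = 2$), Hasse--Minkowski delivers a rational isotropic vector for $Q$, proving the theorem. I would also remark that the verification is finite: at primes $p \nmid 2\det(A_Q)$ the coefficients are already units and $n \ge 5 > 3$ forces local isotropy outright, so only the real place and the primes dividing $2\det(A_Q)$ need be examined.
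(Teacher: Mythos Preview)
The paper does not actually prove Meyer's theorem: it merely states the result in the appendix with a citation to Meyer's 1884 paper and moves on to discuss algorithms for computing isotropic vectors. There is therefore no proof in the paper against which to compare your proposal.

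That said, your sketch is the standard modern derivation of Meyer's theorem from Hasse--Minkowski (which the paper does state just above), and it is correct in outline. The archimedean step is fine; the odd-$p$ local argument via Chevalley--Warning plus Hensel lifting is the usual one; and you have correctly flagged $p=2$ as the delicate case, naming the two customary workarounds (analysis modulo $8$, or the classification by rank, discriminant, and Hasse--Witt invariant showing the unique anisotropic form over $\QQ_2$ has dimension $4$). Your closing remark that only the primes dividing $2\det(A_Q)$ and the real place require checking is also accurate and useful. The only thing to be aware of is that your argument is circular in a historical sense---Meyer's 1884 result predates the local--global principle and is in fact an ingredient in some proofs of Hasse--Minkowski---but as a logical derivation within the framework the paper adopts, your approach is sound and complete modulo the $p=2$ details you have explicitly deferred.
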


Thus all that remains is to compute a representation of zero.
Fortunately Simon's algorithm for finding non-trivial isotropic vectors  for the case \(d=3\) generalises, with admittedly careful case analysis, to \(d\ge 4\).
We note that Simon's algorithm is computationally prohibitive at higher dimensions as the approach for a given quadratic form \(Q\) is equivalent to factoring the determinant of the associated matrix \(A_Q\).
Later work by Castel established an algorithm that, in polynomial time,  computes a representation of zero for a \(5\)-dimensional quadratic form that does not require the factorisation of the determinant~\cite{castel2013solving}.
Castel's algorithm generalises: in higher dimensions one restricts the form to an appropriate \(5\)-dimensional subspace.

\section{\autoref{alg:aff}} \label{app:appendix-mainalgorithm}
\begin{algorithm}[H]
	\caption{Find an affine loop satisfying a given quadratic equation} \label{alg:aff}
	\begin{algorithmic}[1]
	
	\Require an equation $Q(\bm{x}) + L(\bm{x}) = c$ in~$d$ variables (as in~\cref{eq:quadr}).	Assert $d \geq 2$.
	
	\Function{affLoop}{$Q,L,c$}
	\State $\langle M, \bm{s}, \bm{t} \rangle := \textsc{undefined}$
	\State \(\delta := \det(A_Q)\)

\LeftComment*{We use the following ternary operator shorthand for conditional expressions:
	\ConExp{\text{condition}}{\text{expressionTrue}}{\text{expressionFalse}}.}

  \vspace{1em}
  \Switch{cases of quadratic equation $(Q,L,c)$}
	\vspace{0.5em}
  	\Item{\(L(\bm{x})\equiv \bm{0}\):} \Comment{see \cref{sec:qf} and \cref{alg:lin}}
  		\State \(\langle M, \bm{s} \rangle := \textproc{linLoop}(Q,c)\)
  		 \State \algorithmicreturn \ConExp{\langle M, \bm{s} \rangle = \text{``\textsc{no loop}''}}{\text{``\textsc{no loop}''}}{\langle M, \bm{s}, \bm{0} \rangle}
	\EndItem
\vspace{1em}
  	\Item{\(\delta \neq 0\):} \Comment{see  \autoref{loop-for-qe-nondeg} for non-degenerate quadratic forms}
				\State compute $\bm{h}, \tilde{c}$ for the affine map~$f$ associated with~$Q+L=c$.
		\State $\langle M, \bm{s} \rangle := \textproc{linLoop}(Q, \tilde{c})$
  		 \State \algorithmicreturn \ConExp{\langle M, \bm{s} \rangle = \text{``\textsc{no loop}''}}{\text{``\textsc{no loop}''}}	{\langle M, \frac{1}{2\delta}\left(\bm{s}- \bm{h}\right),\frac{1}{2\delta}\left(M - I_d\right)\bm{h}\rangle}%
  	\EndItem
\vspace{1em}	
  	\Item{\(\delta = 0\):} \Comment{see \autoref{ssec:degeneratequad} for degenerate quadratic forms}
		\State compute $\tau \in \GL_d(\QQ)$ such that $\tilde{Q}(x_{k+1}, \dots, x_d) = Q(\tau\bm{x})$ is non-degenerate
		\State rewrite $L(\tau\bm{x}) =  \tilde{L}(x_{k+1}, \dots, x_d) + \lambda_1x_1 + \dots + \lambda_kx_k$
\vspace{0.5em}
		\If{$\lambda_1 = \dots = \lambda_k = 0$}
			\State compute $\delta' = \det{\tilde{Q}}$, $\bm{h}, \tilde{c}$ for the affine map~$f$ associated with~$\tilde{Q}+ \tilde{L} = c$

			\State	$\langle M, \bm{s} \rangle := \textproc{linLoop}(0x_1^2+ \dots + 0x_k^2 + \tilde{Q}(x_{k+1}, \dots, x_{d}), \tilde{c})$
			\State set $M':= \diag{2, \dots, 2, 1, \dots, 1}$ and $\bm{s'}:=\frac{1}{2\delta'}\left(\bm{s} - \begin{pmatrix}
						\bm{0} \\ \bm{h}
					\end{pmatrix}\right)$
			\State $\langle M, \bm{s},\bm{t}\rangle:=$\ConExp
				{\langle M, \bm{s} \rangle = \text{``\textsc{no loop}''}}{\text{``\textsc{no loop}''}}
				{
					\left\langle M', \bm{s'}, \bm{0} \right\rangle
				}
				
				\algstore{myalg}
\end{algorithmic}
\end{algorithm}

(continued overleaf)

\begin{algorithm}[H]                   
\begin{algorithmic} [1]                   %
\algrestore{myalg}
			\Else %

			\Comment{\small rewrite equation as $\tilde{Q}(y_1, \dots, y_r) + \tilde{L}(y_1, \dots, y_r) = c - \sum_{i=1}^k \lambda_i x_i$ with $\lambda_i \neq 0$}
			\State set $\bm{y} := \bm{\alpha}$ arbitrarily and find a solution $\bm{\beta} = (\beta_1, \dots, \beta_k)$ for $\bm{x}$
			\Switch{for degenerate quadratic equations with linear forms} %
				\Item{$k>1$:} \Comment{see \hyperlink{item:1}{Case 1} of \autoref{loop-for-qe-deg}}
					\State $\langle M, \bm{s},\bm{t}\rangle := \left\langle \begin{pmatrix}2 & 0\\-\frac{\lambda_1}{\lambda_2} & 1 \end{pmatrix} \oplus I_{d-2}, (\bm{\beta}, \bm{\alpha})^\top,\bm{0} \right\rangle$; \Break
				\EndItem
				\Item{$r>1$:} \Comment{see \hyperlink{item:2}{Case 2} of \autoref{loop-for-qe-deg}}
							\State  find $\delta' = \det{\tilde{Q}}$ and $\bm{h}, \tilde{c}$ for the map~$f$ assoc.~with $\tilde{Q}+ \tilde{L} = c-\lambda_1\beta_1$
							\State	define $\langle M, \bm{s}\rangle := \textproc{linLoop}(\tilde{Q}, \tilde{c})$ and $\bm{v} := \frac{1}{2\delta'}(M-I_r)\bm{h}$
							\State $\langle M, \bm{s},\bm{t}\rangle := \left\langle 
			\left(\begin{array}{c|c} 1 & 0_{1,r} \\ \hline \frac{1}{\beta_1}\bm{v}  & M	\end{array}\right)\!\!, 
			(\beta_1, \frac{1}{2\delta'}(\bm{s}-\bm{h}))^\top\!, \bm{0}\right\rangle$; \Break
				\EndItem
				\Item{$r = 1$ and $k = 1$:}  \Comment{see \hyperlink{item:3}{Case 3} of \autoref{loop-for-qe-deg}}
				
				\Comment{\cref{eq:QLA} has the form $ax^2+bx = c - dy$ with $d\neq 0$}
					\State $\left\langle M, \bm{s},\bm{t}\right\rangle := \left\langle \begin{pmatrix}
							2 & 0 \\
							2\frac{b}{d} & 4
						\end{pmatrix},
			 (1, (c-a-b)/{d})^\top,
			 (0, -3c/{d})^\top 
			\right\rangle$
				\EndItem
			\EndSwitch
		\EndIf
\State \algorithmicreturn
\ConExp{\langle M, \bm{s},\bm{t} \rangle = \text{``\textsc{no loop}''}}{\text{``\textsc{no loop}''}}	{\langle \tau M \tau^{-1}, \tau\bm{s}, \tau\bm{t}\rangle}
  	\EndItem 
  \EndSwitch
  \EndFunction
\end{algorithmic}
\end{algorithm}

\end{document}